\newtheorem{corollary}{Corollary}
\newtheorem{remark}{Remark}
\newtheorem{definition}{Definition}
\newtheorem{lemma}{Lemma}
\newtheorem{theorem}{Theorem}
\newtheorem{example}{Example}
\newtheorem{proposition}{Proposition}
\journal{Information Sciences}
\begin{document}

\hypersetup{hidelinks}

\begin{frontmatter}

\title{Approximate synchronization of coupled multi-valued logical networks}

\tnotetext[mytitlenote]{This work was supported by the National Natural Science Foundation of China (61877036, 12101366), and the Natural Science Foundation of Shandong Province (ZR2019MF002, ZR2020QF117).}


\author[1]{Rong Zhao}
\ead{zhaorongjy1126@163.com}

\author[1]{Jun-e Feng\corref{mycorrespondingauthor}}
\cortext[mycorrespondingauthor]{Corresponding author}
\ead{fengjune@sdu.edu.cn}

\author[2]{Biao Wang}
\ead{wangbiao@sdu.edu.cn}

\address[1]{School of Mathematics, Shandong University, Jinan 250100, Shandong, P. R. China.}
\address[2]{School of Management, Shandong University, Jinan 250100, Shandong, P. R. China.}

\begin{abstract}
This article deals with the approximate synchronization of two coupled multi-valued logical networks. According to the initial state set from which both systems start, two kinds of approximate synchronization problem, local approximate synchronization and global approximate synchronization, are proposed for the first time. Three new notions: approximate synchronization state set (ASSS), the maximum approximate synchronization basin (MASB) and the shortest approximate synchronization time (SAST) are introduced and analyzed. Based on ASSS, several necessary and sufficient conditions are obtained for approximate synchronization. MASB, the set of all possible initial states, from which the systems are approximately synchronous, is investigated combining with the  maximum invariant subset. And the calculation method of the SAST, associated with transient period, is presented. By virtue of MASB, pinning control scheme is investigated to make two coupled systems achieve global approximate synchronization. Furthermore, the related theories are also applied to the complete synchronization problem of $k$-valued ($k\geq2$) logical networks. Finally, four examples are given to illustrate the obtained results.
\end{abstract}

\begin{keyword}
Approximate synchronization, Cheng product, multi-valued logical networks, pinning control
\end{keyword}

\end{frontmatter}


\section{Introduction}\label{sec1}

In recent years,  logical dynamic systems have received much attention in the fields of systems biology, physics, medicine and engineering.
Boolean network (BN), introduced by Kauffuman in 1969, is a typical logical dynamic system, which is used to the model genetic regulatory network \cite{KAUFFMAN1969}. In a BN, each gene is regarded as a node of the network and has only two states: 1 or 0, corresponding to on or off, respectively.
Thus, the interaction between genes can be expressed by Boolean functions, and biological processes such as gene regulation and cell differentiation can be well described.
With the development of Cheng product, also called semi-tensor product (STP) \cite{Chengdaizhan2011c}, a great number of research results about BNs have been obtained, such as  controllability and observability\cite{
2018Controllability, lujianquan2021observability}, stability and stabilization\cite{huangchi2020,2020Feedbackstabilization}, optimal control\cite{WU2018optimal,WU2019optimal}, disturbance decoupling problem\cite{2017disturbancedecoupling,wang2021DisturbanceDecoupling}, detectability \cite{wangbiao2019detectabilityPBN,wangbiao2020detectabilityBN,chenzengqiang2021} and so on.

However, each state variable of a BN only takes two possible values. Therefore, BNs can not describe many actual situation accurately. For example, in a game process, each player has multiple strategy choices, which cannot be accurately described by a BN. Therefore, a more general multi-valued logical network (MVLN), also called $k$-valued logical network, comes into being. By resorting to Cheng product, basic properties of MVLNs have been well studied in \cite{lizhiqiang2010}\cite{cheng2012}.
Although an MVLN can be regarded as a natural generalization of a BN, it still has more complicated structure and wider applications than BNs. Research works related to MVLN have been widely applied to genetic regulatory networks \cite{Kitano2009Foundations}, finite automata\cite{yuejumei2019automata, zhangzhipeng2020automata}, networked evolutionary games\cite{fushihua2017games,Maoying2018games}, nonlinear feedback shift registers\cite{ liuzhenbin2015register, wanghaiyan2017registers,lu2021registers} and other fields. Hence, it is necessary and significant to further study MLVNs.

Synchronization is a quite common collective behavior phenomenon in real world, such as the synchronous glow of fireflies, birds flying in the same direction, synchronous transmission of signals, laser oscillation synchronization, harmonious fluctuations of heart muscle cells and brain neural networks \cite{Ditto2002synchronization}.
Thus it can be seen that synchronization phenomenon has comprehensive research and application background in the fields of physics, biology, chemistry, engineering and so on.
In the past few decades, the research of clustering and synchronization in networks has drawn much attention of scholars, such as synchronization of complex networks \cite{Arenas2008complexnetwoks},
$H_\infty$ synchronization of uncertain neural networks \cite{Karimi2010HinfinitySynchronization}, synchronization of Kauffman networks \cite{Morelli2001Kauffmannetworks} and so on.
Recently, the synchronization of logical networks has become a hot topic, and different kinds of synchronization problems have been well investigated.
Complete synchronization of drive-response BNs was studied in \cite{lirui2012CompleteSynchronization} firstly.
Partial synchronization and local synchronization of interconnected BNs were investigated in \cite{chenhongwei2017partialsynchronization} and \cite{chenhongwei2020LocalSynchronization}, respectively.
Cluster synchronization of BNs was discussed in \cite{zhanghao2018cluster} via open-loop control. Robust synchronization of BNs with disturbances was studied in \cite{liyuanyuan2018robustsynchronization}. Impulsive effects on synchronization was considered in \cite{zhongjie2014}. Synchronization of  an array of output-coupled BNs was investigated in \cite{lujianquan2021}.
In addition, the synchronization of MVLNs was also considered in \cite{mengmin2014multi-valued, liyalu2018event-triggeredcontrol}.

On the other hand, many classical control schemes have been extensively investigated to settle the synchronization problem of BNs, such as state feedback control\cite{liuyang2016feedbackcontrol}, event-triggered control \cite{liyalu2018event-triggeredcontrol} and sampled-data control \cite{liuyang2019Sampled-datacontrol}.
In particular, by taking advantage of pinning control, reference \cite{lifangfei2016pinningcontrol} solved the synchronization problem of drive-response BNs.
It is worth noting that pinning control is an effective control strategy in control theory.
The basic idea of pinning control is to adjust the properties of a part of nodes in the network, such that the whole network can achieve expected behavior through the mutual coupling between nodes.
Compared with other control methods, the advantage of pinning control is that it can dramatically reduce the energy consumption since it only needs to control a fraction of  nodes.
Moreover, reference \cite{Rosin2013} showed that synchronization patterns can be achieved by adjusting the refractory time of 2 out of 32 nodes, which means that one can achieve control goals of the global network via regulating quite a small fraction of nodes.

It is worth pointing out that complete synchronization is impracticable in many circumstances, such as the existence of parameter mismatches, the nonidentical node dynamics in heterogeneous networks. But the synchronization errors are expected to  be restricted  in a small permissible range, that is approximate synchronization\cite{xujunqun2013approxiamtecomplex}.
In complex networks, chaotic systems and other fields, the approximate synchronization problem has drawn much attention, such as \cite{xujunqun2013approxiamtecomplex,Sorrentino2016approximate,2021Quasi,2020Bounded}.
Due to various possible deviations, natural imperfection, human interference and other uncontrollable factors in practical engineering, the synchronous error is also inevitable in logical networks, which attracts us to propose a general framework to deal with these situations. In addition, to the best of our knowledge, up to now, there is no literature on approximate synchronization of logical systems.  Motivated by these, we pay attention to the approximate synchronization of coupled MVLNs.

In this paper, the pinning control strategy is employed to settle the approximate synchronization of coupled MVLNs. The main challenges of this paper include:
\begin{enumerate}[1)]
  \item How to propose a reasonable concept for approximate synchronization by virtue of the distinguish of coupled MVLNs is the first challenge. Different from complex networks, logical networks have special structure and properties, then relevant approximate synchronization problem in \cite{xujunqun2013approxiamtecomplex,Sorrentino2016approximate,2021Quasi,2020Bounded} cannot be directly applied to logical networks.

  \item How to select pinning nodes is one of the difficulties. Different from [39], where the systems considered are drive-response BNs and the pinning controllers are injected in response network, our research objectives are coupled MVLNs, which is more complicated than drive-response BNs.
  \item How to construct the pinning feedback controllers after getting pinning nodes is also challenging to address. In [39], pinning feedback controllers are derived by solving logic matrix equations based on enumeration method, which is rather tedious. Therefore, giving criteria and corresponding algorithm for constructing the pinning feedback controllers is also a challenge.
\end{enumerate}

The main contributions of this paper are presented as follows.
\begin{enumerate}[1)]
  \item Two novel approximate synchronization concepts of coupled MVLNs, including local version and global version, are put forward for the first time. Compared with \cite{lirui2012CompleteSynchronization, mengmin2014multi-valued, chenhongwei2020LocalSynchronization}, the approximate synchronization discussed in this paper is more general and can degenerate into complete synchronization.
  \item The approximate synchronous state set and the maximum approximate synchronization basin (MASB) are investigated. Based on these, several necessary and sufficient conditions are provided for determining global and local approximate synchronization.
      Besides, the shortest approximate synchronization time is introduced and its calculation method is provided resorting to the maximum invariant subset.
  \item Pinning control strategy is considered to make coupled MVLNs achieve global approximate synchronization. Based on the MASB and the maximum invariant subset, pinning nodes are found out. Further, criteria for the solvability of pinning feedback controllers are presented. Compared with \cite{lifangfei2016pinningcontrol}, our approach improves and extends the results in \cite{lifangfei2016pinningcontrol}, and thus is more general and applicable.
\end{enumerate}

The remainder of this paper is organised as follows. At the end of Section \ref{sec1}, some notations, used in this paper, are presented. Section \ref{sec2} introduces Cheng product and $k$-valued logic.  Section \ref{sec3} proposes and briefly analyzes the approximate synchronization problem.
Section \ref{sec4} is the main results of this paper, including the approximate synchronization conditions, the calculation of MASB and the shortest approximate synchronization time, and the design of pinning control strategy.
In Section \ref{sec6}, related results are applied to the complete synchronization problem of $k$-valued logical networks.
Section \ref{sec5} gives some examples to verify the validity of the obtained results.
Section \ref{sec7} is a brief conclusion of this paper.

$Notations:$
\begin{enumerate}[1)]
  \item $\mathbb{Z_+}$: the set of all positive integers.
  \item $\mathcal{D}_k:=\{0, \frac{1}{k-1},\ldots,\frac{k-2}{k-1}, 1\}$, $k\geq2$. In particular, $\mathcal{D}:=\{0, 1\}$.
  \item $\Delta_k:=\{\delta_k^i\mid i=1,2,\ldots,k\}$, where $\delta_k^i$ represents the $i$-th column of identity matrix $I_k$. Denote $\Delta_2:=\Delta$.
  \item $\mathbb{R}^{m\times n}$: the set of $m\times n$ real matrices.
  \item $Col_i(M)$: the $i$-th column of matrix $M$.
  \item $Col(M)$ : the set of columns of $M$.
  \item $[M]_{i,j}$: the element on the $(i,j)$ entry of matrix $M$.
  \item $sgn(a):=
  \begin{cases}
  1,~&a>0,\\
  0,~&a=0,\\
  -1,~&a<0.
 \end{cases}$ If $M=([M]_{i,j})_{m\times n}\in\mathbb{R}^{m\times n}$, then $sgn(M):=(sgn[M]_{i,j})_{m\times n}$.
  \item $\mathcal{L}_{m\times n}$: the set of $m\times n$ logic matrices. $L\in\mathcal{L}_{m\times n}$  means $Col(L)\subseteq\Delta_m$.
  \item If $L\in\mathcal{L}_{m\times n}$, then it can be expressed as $L=[\delta_m^{i_1}, \delta_m^{i_2}, \ldots , \delta_m^{i_n}]$. For the sake of compactness, it is briefly denoted by $L=\delta_m[i_1, i_2, \ldots, i_n]$.
  \item $\mathcal{B}_{m\times n}$: the set of $m\times n$ Boolean matrices. $M\in\mathcal{B}_{m\times n}$ means that all its entries are either 0 or 1.
  \item $\otimes$: Kronecker product of matrices.
  \item $\ast$: Khatri-Rao product of matrices.
  \item $\circ$: Hadamard product of matrices.
  \item $M^\top$: the transpose of  matrix $M$.
  \item $\mathbf{1}_n:=[\underbrace{1, 1, \ldots, 1}_n]^\top$.
  \item $|S|$: the cardinality of set $S$.
  \item For two vectors $V_1=[v_{11},v_{12},\ldots,v_{1n}]^\top$ and $V_2=[v_{21},v_{22},\ldots,v_{2n}]^\top$, the inequality $V_1\leq V_2$ means that $v_{1j}\leq v_{2j}$, $j=1,2,\ldots,n$.
\end{enumerate}

\section{Preliminaries}\label{sec2}
In this section, we introduce some necessary preliminaries about Cheng product and $k$-valued logic.

\subsection{Cheng Product}
To begin with, the definition of Cheng product is given, and some useful properties are proposed.
For details, please refer to \cite{Chengdaizhan2011c}.
\begin{definition}{\rm\cite{Chengdaizhan2011c}}
 Let $A\in\mathbb{R}^{m\times n}$ and  $B\in\mathbb{R}^{p\times q}$. The Cheng product of $A$ and $B$  is defined as
\begin{equation}
    A\ltimes B=(A\otimes I_{t/n})(B\otimes I_{t/p}),
\end{equation}
where $t$ is the least common multiple of $n$ and $p$.
\end{definition}

Note that when $n=p$, Cheng product degenerates into the traditional matrix product. Without confusion, the symbol $\ltimes$ is omitted throughout this paper.

\begin{lemma}{\rm\cite{Chengdaizhan2011c}}\label{lem1}
Three basic properties of Cheng product are given as follows.
\begin{enumerate}[1)]
  \item Let $A\in\mathbb{R}^{m\times n}$ and $X\in\mathbb{R}^{t\times1}$. Then
    \begin{equation}
      XA=(I_t\otimes A)X.
    \end{equation}
  \item  Let $X\in\mathbb{R}^{m\times 1}$ and $Y\in\mathbb{R}^{n\times 1}$ be two column vectors. Then
    \begin{equation}
      W_{[m,n]}XY=YX,
    \end{equation}
    where $W_{[m,n]}:=$ $[I_n\otimes\delta_m^1, I_n\otimes\delta_m^2, \ldots, I_n\otimes\delta_m^m]$ is called the swap matrix.
  \item Let $x\in\Delta_k$. Then
    \begin{equation}
     x^2=M_{r,k}x,
    \end{equation}
    where $M_{r,k}:=$ $diag\{\delta_k^1, \delta_k^2,\ldots,\delta_k^k\}$ is called the power-reducing matrix.
\end{enumerate}
\end{lemma}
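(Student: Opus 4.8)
The plan is to derive all three identities from elementary Kronecker-product algebra, in particular the mixed-product rule $(P\otimes Q)(R\otimes S)=(PR)\otimes(QS)$, valid whenever the ordinary products $PR$ and $QS$ are defined. The single fact that unlocks everything is that Cheng product reduces to the Kronecker product on column vectors: if $U\in\mathbb{R}^{p\times1}$ and $V\in\mathbb{R}^{q\times1}$, then the least common multiple of $1$ and $q$ is $q$, so
\[
U\ltimes V=(U\otimes I_q)(V\otimes I_1)=(UI_1)\otimes(I_qV)=U\otimes V.
\]
I would state and prove this auxiliary identity first, then handle the three items in turn.

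For item 1), I would compute each side and show both equal $X\otimes A$. Since $X$ is $t\times1$ and $A$ is $m\times n$, the relevant least common multiple is $m$, hence $XA=X\ltimes A=(X\otimes I_m)(A\otimes I_1)=(XI_1)\otimes(I_mA)=X\otimes A$. On the other side, $(I_t\otimes A)X$ must itself be read as a Cheng product (the ordinary product does not conform unless $n=1$); the least common multiple of $tn$ and $t$ is $tn$, so $(I_t\otimes A)\ltimes X=(I_t\otimes A)(X\otimes I_n)=(I_tX)\otimes(AI_n)=X\otimes A$. Comparing the two expressions gives the claim. Item 3) is then immediate: writing $x=\delta_k^i\in\Delta_k$, the auxiliary fact gives $x^2=x\ltimes x=\delta_k^i\otimes\delta_k^i$, which is precisely the $i$-th column of the block-diagonal matrix $M_{r,k}=diag\{\delta_k^1,\dots,\delta_k^k\}$, i.e. $M_{r,k}\delta_k^i$; therefore $x^2=M_{r,k}x$.

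For item 2), first note that $XY=X\otimes Y$ and $YX=Y\otimes X$ by the auxiliary fact, so it suffices to prove $W_{[m,n]}(X\otimes Y)=Y\otimes X$. Since both sides are bilinear in $(X,Y)$, it is enough to verify the identity on canonical basis vectors $X=\delta_m^i$, $Y=\delta_n^j$, and then extend by linearity. Using the index rule $\delta_p^a\otimes\delta_q^b=\delta_{pq}^{(a-1)q+b}$, the left-hand side equals $W_{[m,n]}\delta_{mn}^{(i-1)n+j}$, and from the block description $W_{[m,n]}=[I_n\otimes\delta_m^1,\dots,I_n\otimes\delta_m^m]$ this selects the $j$-th column of the $i$-th block $I_n\otimes\delta_m^i$, namely $\delta_n^j\otimes\delta_m^i=\delta_{nm}^{(j-1)m+i}$; the right-hand side $Y\otimes X$ is $\delta_n^j\otimes\delta_m^i$ as well, so the two coincide.

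The computations are short; the only places that demand care are the bookkeeping of column indices in the block form of $W_{[m,n]}$ for item 2), and, for item 1), remembering that the product $(I_t\otimes A)X$ appearing on the right-hand side is a Cheng product rather than an ordinary matrix product — choosing the wrong induced identity factor ($I_n$ versus $I_t$) there is the easiest slip. I expect the swap-matrix index argument in item 2) to be the main, though still modest, obstacle.
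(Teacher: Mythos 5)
Your proposal is correct. Note that the paper itself offers no proof of this lemma --- it is imported verbatim from the cited reference \cite{Chengdaizhan2011c} --- so there is no in-paper argument to compare against; your derivation via the mixed-product rule for Kronecker products, the observation that $\ltimes$ reduces to $\otimes$ on column vectors, and the column-indexing check for $W_{[m,n]}$ is a sound and self-contained way to establish all three identities, including the easily-missed point that both $XA$ and $(I_t\otimes A)X$ in item 1) must be read as Cheng products.
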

Notice that 1) of Lemma 1 reveals that Cheng product satisfies pseudo commutativity. 2) of Lemma 1 shows that two vectors can be exchanged through the swap matrix. And 3) of Lemma 1 illustrates the order reduction effect of power-reducing matrix.

\subsection{$k$-Valued Logic}

This subsection introduces $k$-valued logic and gives the matrix expression of $k$-valued logic.
\begin{definition}{\rm\cite{Chengdaizhan2011c}}
A variable $x$ is called a $k$-valued logical variable if it takes a value from $\mathcal{D}_k$, i.e., $x\in\mathcal{D}_k$.
\end{definition}
Some useful logical operators of $k$-valued logic are presented as follows.
\begin{definition}{\rm\cite{Chengdaizhan2011c}}
Assume $a$ and $b$ are two $k$-valued logical variables. Then
\begin{enumerate}[1)]
  \item Negation $(\neg):$ $\neg a=1-a$.
  \item Conjunction $(\wedge):$ $a\wedge b=\min\{a,b\}$.
  \item Disjunction $(\vee):$ $a\vee b=\max\{a,b\}$.
  \item Mod $k$ addition $(\oplus_k):$ $a\oplus_k b=\frac{[(k-1)(a+b)](mod~k)}{k-1}$.
  \item The i-confirmor $(\nabla_{i,k}):$ $\nabla_{i,k}(a)=
  \begin{cases}
  1,~a=\frac{k-i}{k-1},\\
  0,~otherwise.
  \end{cases}$
  \item The rotator $(\oslash_k):$ $\oslash_k(a)=
  \begin{cases}
  a-\frac{1}{k-1}, &a\neq0,\\
  1, &a=0.
  \end{cases}$
\end{enumerate}
\end{definition}

In order to obtain the matrix expression of the $k$-valued logical operators, we identify $\frac{k-i}{k-1}$ as $\delta_{k}^i$, $i=1,2,\ldots,k$. With Cheng product, we have the following lemma.

\begin{lemma}{\rm\cite{Chengdaizhan2011c}}\label{lem2}
\begin{enumerate}[1)]
\item Given a $k$-valued logical function $f: \mathcal{D}_k^n\rightarrow\mathcal{D}_k$. There exists a unique matrix $M_f\in\mathcal{L}_{k\times k^n}$ such that the vector form of $f$ is
\begin{equation}\label{eqlem2}
f(x_1, x_2, \ldots, x_n)=M_f\ltimes_{i=1}^n x_i,
\end{equation}
where $x_i\in\Delta_k$ and $M_f$ is called the structure matrix of $f$.
\item  Assume a $k$-valued logical function $f$ has algebraic form $\eqref{eqlem2}$.
Then the logical form of $f$ can be expressed as
\begin{align}
&f=[\nabla_{1,k}(x_1)\wedge f_1(x_2,\ldots,x_n)]\vee[\nabla_{2,k}(x_1)\wedge f_2(x_2,\notag\\
&\ldots,x_n)] \vee\cdots\vee[\nabla_{k,k}(x_1)\wedge f_k(x_2,\ldots,x_n)],
\end{align}
where $f_i$ has $Blk_i(M_f)$ as its structure matrix, $i=1,2,\ldots,k$, and  $Blk_i(M_f)\in\mathcal{L}_{k\times k^{n-1}}$ is the $i$-th block by splitting $M_f$ equally into $k$ blocks.
\end{enumerate}
\end{lemma}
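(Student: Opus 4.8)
The plan is to prove the two assertions separately, both resting on the identification $\frac{k-i}{k-1}\leftrightarrow\delta_k^i$ and on the elementary combinatorics of Cheng products of standard basis vectors.

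For part 1), the first observation I would make is that as $(x_1,\ldots,x_n)$ ranges over $\Delta_k^n$, the product $\ltimes_{i=1}^n x_i$ ranges bijectively over $\Delta_{k^n}$; concretely $\ltimes_{i=1}^n\delta_k^{j_i}=\delta_{k^n}^{\,j}$ with $j=1+\sum_{i=1}^n(j_i-1)k^{n-i}$, and this index map is a bijection of $\{1,\ldots,k\}^n$ onto $\{1,\ldots,k^n\}$. I would then simply \emph{define} $M_f$ column by column: let $Col_j(M_f)$ be the vector form of $f$ evaluated at the tuple corresponding to $\delta_{k^n}^{\,j}$. Since $f$ takes values in $\mathcal{D}_k$, each such column lies in $\Delta_k$, hence $M_f\in\mathcal{L}_{k\times k^n}$, and by construction $M_f\ltimes_{i=1}^n x_i=M_f\delta_{k^n}^{\,j}=Col_j(M_f)=f(x_1,\ldots,x_n)$, which gives existence. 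Uniqueness is immediate: any matrix satisfying \eqref{eqlem2} must have $j$-th column equal to $M_f\delta_{k^n}^{\,j}$, so it is forced to coincide with the matrix just constructed.

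For part 2), the key identity is the block-extraction property $M_f(\delta_k^i\otimes I_{k^{n-1}})=Blk_i(M_f)$, which is immediate from splitting $M_f$ into $k$ equal blocks of width $k^{n-1}$. By the definition of Cheng product, $M_f\ltimes\delta_k^i=M_f(\delta_k^i\otimes I_{k^{n-1}})$, so substituting $x_1=\delta_k^i$ into \eqref{eqlem2} yields $f(x_1,x_2,\ldots,x_n)=Blk_i(M_f)\ltimes_{j=2}^n x_j=f_i(x_2,\ldots,x_n)$; that is, fixing $x_1=\frac{k-i}{k-1}$ collapses $f$ to the function $f_i$ whose structure matrix is $Blk_i(M_f)$. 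It then remains to check that the proposed disjunctive form reproduces exactly this case analysis. For any fixed $x_1$, exactly one confirmor $\nabla_{i,k}(x_1)$ equals $1$ and the others equal $0$; because every $f_i$ takes values in $[0,1]$, one has $\nabla_{i,k}(x_1)\wedge f_i=\min\{1,f_i\}=f_i$ for the active index $i$ and $\nabla_{\ell,k}(x_1)\wedge f_\ell=\min\{0,f_\ell\}=0$ for $\ell\neq i$, so the disjunction evaluates to $\max\{f_i,0,\ldots,0\}=f_i$, matching the reduction above for every choice of $x_1$.

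I do not expect a genuine obstacle here — the whole argument is essentially bookkeeping — but the steps most prone to slips are the explicit index bijection in part 1) (keeping the powers of $k$ in the correct order, i.e. that $\ltimes$ of basis vectors behaves like mixed-radix encoding) and, in part 2), verifying that the $\min$/$\max$ computations really do collapse to $f_i$, which relies on the bounds $0\le f_i\le 1$ inherited from $\mathcal{D}_k\subseteq[0,1]$ rather than on any deeper structure.
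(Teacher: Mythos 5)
Your proof is correct. The paper itself gives no proof of this lemma---it is quoted directly from the cited reference \cite{Chengdaizhan2011c}---and your argument (column-by-column construction of $M_f$ via the mixed-radix bijection for part 1, and the block-extraction identity $M_f\ltimes\delta_k^i=Blk_i(M_f)$ combined with the one-hot behaviour of the confirmors for part 2) is exactly the standard derivation found in that source.
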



According to Lemma \ref{lem2}, some useful structure matrices of common logical operators are provided as follows. Denote the structure matrices of $\neg$, $\wedge$, $\vee$, $\oplus_k$, $\nabla_{i,k}$ and $\oslash_k$ by $M_{n,k}$, $M_{c,k}$, $M_{d,k}$, $M_{\oplus_k}$, $M_{\nabla_{i,k}}$ and $M_{\oslash_k}$, respectively. Then

\begin{enumerate}[1)]
  \item[1)] $M_{n,k}=\delta_k[k,k-1,\ldots,1]\in\mathcal{L}_{k\times k}$.
  \item[2)] $M_{c,k}=\delta_{k}[\underbrace{1,2,3,\ldots,k}_k,
\underbrace{2,2,3,\ldots,k}_k,\ldots,\underbrace{k,k,\ldots,k}_k]\in\mathcal{L}_{k\times k^2}$.
  \item[3)] $M_{d,k}=\delta_{k}[\underbrace{1,1,1\ldots,1}_k,
\underbrace{1,2,2,\ldots,2}_k,\ldots,\underbrace{1,2,\ldots,k}_k]\in\mathcal{L}_{k\times k^2}$.
  \item[4)]
$M_{\oplus_k}=\delta_k[\underbrace{2,3,\ldots,k,1}_k,
\underbrace{3,\ldots,k,1,2}_k,\ldots,\underbrace{1,2,\ldots,k}_k]\in\mathcal{L}_{k\times k^2}$.
  \item[5)]
$M_{\nabla_{i,k}}=\delta_k[\underbrace{k,k,\ldots,k}_{i-1},1,\underbrace{k,k,\ldots,k}_{k-i}]\in\mathcal{L}_{k\times k}$.
  \item[6)]
$M_{\oslash_k}=\delta_{k}[2,3,4,\ldots,k,1]\in\mathcal{L}_{k\times k}$.
\end{enumerate}

\section{Problem Formulation}\label{sec3}

In this section, we put forward the approximate synchronization concept of $k$-valued ($k>2$) logical networks.

Consider the following two coupled $k$-valued ($k>2$) logical networks.
\begin{align}
  x_i(t+1)=f_i(x_1(t),\ldots,x_n(t),z_1(t),\ldots,z_n(t)),\label{sys1}\\
  z_i(t+1)=g_i(x_1(t),\ldots,x_n(t),z_1(t),\ldots,z_n(t)),\label{sys2}
\end{align}
where $x_i(t)\in\mathcal{D}_k$, $z_i(t)\in\mathcal{D}_k$, $i=1,2,\ldots,n$ are state variables of the coupled systems \eqref{sys1} and \eqref{sys2}, respectively,
and $f_i: \mathcal{D}_k^n\rightarrow\mathcal{D}_k$,
$g_i: \mathcal{D}_k^n\rightarrow\mathcal{D}_k$, $i=1,2,\ldots,n$ are logical functions.
Let $X(t):=(x_1(t),\ldots,x_n(t))\in\mathcal{D}_k^n$ and $Z(t):=(z_1(t),\ldots,z_n(t))\in\mathcal{D}_k^n$.
The state trajectories of systems \eqref{sys1} and \eqref{sys2} starting from initial states $(X_0, Z_0)\in\mathcal{D}_k^{2n}$ are denoted by $X(t; X_0,Z_0)$ and $Z(t;X_0,Z_0)$ with  $x_i(t;X_0,Z_0)$ and $z_i(t;X_0,Z_0)$ as their $i$-th components, respectively.

Next, we propose the definition of approximate synchronization of $k$-valued ($k>2$) logical networks.
\begin{definition}\label{def1}
{\rm (Approximate Synchronization Definitions)}
\begin{enumerate}[1)]
\item Given a state set $\Psi\subseteq\mathcal{D}_k^{2n}$. Systems \eqref{sys1} and \eqref{sys2} are called locally approximately synchronous with respect to $\Psi$, if there exists  an  integer $\rho\in\mathbb{Z_+}$ such that
\begin{equation}\label{eqdef1}
\max_{1\leq i\leq n}|x_i(t;X_0,Z_0)-z_i(t;X_0,Z_0)|\leq \frac{1}{k-1}
\end{equation}
holds for any $(X_0, Z_0)\in\Psi$ and $t\geq \rho$.
\item Systems \eqref{sys1} and \eqref{sys2} are called globally approximately synchronous, if there exists an integer $\rho\in\mathbb{Z_+}$ such that
\begin{equation}\label{eqdef2}
\max_{1\leq i\leq n}|x_i(t;X_0,Z_0)-z_i(t;X_0,Z_0)|\leq \frac{1}{k-1}
\end{equation}
holds for any $(X_0, Z_0)\in\mathcal{D}_k^{2n}$ and $t\geq \rho$.
\end{enumerate}
\end{definition}

$\Psi\subseteq\mathcal{D}_k^{2n}$ is called an $approximate$ $synchronization$ $basin$ of systems \eqref{sys1} and \eqref{sys2}, if  \eqref{sys1} and \eqref{sys2} are approximately synchronous with respect to $\Psi$. The $maximum$ $approximate$ $synchronization$ $basin$ (MASB) of systems \eqref{sys1} and \eqref{sys2} is denoted by $\Psi_{\max}$.
Denote the vector form of $\Psi\subseteq\mathcal{D}_k^{2n}$ by $\Phi\subseteq\Delta_{k^{2n}}$, i.e., $\Psi\sim\Phi$. Similarly, $\Psi_{\max}\sim\Phi_{\max}$.
From 1) of Definition \ref{def1}, if systems \eqref{sys1} and \eqref{sys2} are approximately synchronous with respect to $\Psi\subseteq\mathcal{D}_{k}^{2n}$, then there exists an  integer $\rho\in\mathbb{Z_+}$ such that \eqref{eqdef1} holds.
On this ground, the minimum integer $\rho\in\mathbb{Z_+}$ satisfying 1) of Definition \ref{def1}, denoted by $\Gamma_{\Psi}$, is called the $shortest$ $approximate$ $synchronization$ $time$ with respect to $\Psi$. Similarly, the minimum integer $\rho\in\mathbb{Z_+}$ satisfying 2) of Definition \ref{def1}, denoted by $\Gamma$, is called the $global$ $shortest$ $approximate$ $synchronization$ $time$.

\begin{remark}\label{rem1}

1) (Generalization of Definition \ref{def1})

In Definition \ref{def1}, conditions \eqref{eqdef1} and \eqref{eqdef2} can be generalized as
\begin{equation}\label{eqrem1}
\max_{1\leq i\leq n}|x_i(t;X_0,Z_0)-z_i(t;X_0,Z_0)|\leq \frac{\gamma}{k-1},
\end{equation}
where $0\leq\gamma\leq k-1$. Depending on practical requirements, $\gamma$ can take different values. The value of $\gamma$  reflects actually the degree of approximate synchronization. It is obvious that the smaller the value of $\gamma$, the higher the degree of approximate synchronization. Obviously, if $\gamma=0$, then complete synchronization occurs.
Even if the complete synchronization is unable to realize, we can still predict and investigate the dynamic behaviors  and characteristics of the coupled systems since the upper bound of the synchronization error can be estimated. Thus it will has potential and wide-ranging application prospects, such
as in the fields of multi-agent leader-follower systems\cite{multi-agent}, biochemical systems\cite{Heidel2003biochemicalsystems}, hierarchical networked
evolutionary games\cite{2004Evolutionary}, etc.

In this paper, we mainly investigate the approximate synchronization for the case of $\gamma=1$.
It is worth pointing out that if $k=2$, then each node has only two possible values: 0 and 1. In this case, it is meaningless to consider approximate synchronization. Hence, we only focus on the case of $k>2$ in regard to approximate synchronization.

2) (Physical meaning of the shortest approximate synchronization time)

The shortest approximate synchronization time defined as the smallest integer satisfying Definition 4, can reflect the speed of approximate synchronization of two logical networks, which also represents the strength of the synchronization capability of two systems from the perspective of optimal time. Hence it is meaningful in the process of theoretical analysis and engineering practice.
\end{remark}

According to Lemma \ref{lem2}, we assume that $F_i\in\mathcal{L}_{k\times k^{2n}}$ and $G_i\in\mathcal{L}_{k\times k^{2n}}$ are the structure matrices of $f_i$ and $g_i$, $i=1,2,\ldots,n$, respectively,
then systems \eqref{sys1} and \eqref{sys2} can be converted into the following equivalent algebraic form:
\begin{align}
  x(t+1)=Fx(t)z(t),\label{sys3}\\
  z(t+1)=Gx(t)z(t),\label{sys4}
\end{align}
where $x(t):=\ltimes_{i=1}^nx_i(t)\in\Delta_{k^n}$, $z(t):=\ltimes_{i=1}^nz_i(t)\in\Delta_{k^n}$, $F:=F_1\ast F_2*\cdots\ast F_n\in\mathcal{L}_{k^n\times k^{2n}}$,
$G:=G_1\ast G_2*\cdots\ast G_n\in\mathcal{L}_{k^n\times k^{2n}}$.
Define $\xi(t)=x(t)z(t)\in\Delta_{k^{2n}}$, then we can get the following augmented system:
\begin{equation}\label{sys5}
  \xi(t+1)=L\xi(t),
\end{equation}
where $L:=F(I_{k^{2n}}\otimes G)M_{r,k^{2n}}\in\mathcal{L}_{k^{2n}\times k^{2n}}$ is the transition matrix of system \eqref{sys5}. The trajectory of system \eqref{sys5} starting  from initial state $\xi_0\in\Phi$ is denoted by $\xi(t;\xi_0)$.

From the analysis above, \eqref{sys1} and \eqref{sys2} are converted into an augmented system \eqref{sys5}, which is helpful to systematically analyze the synchronization problem.
Next, we analyze which states satisfy the approximate synchronization condition \eqref{eqdef1} or \eqref{eqdef2}.

Suppose the states of systems \eqref{sys1} and \eqref{sys2} are expressed as:
\begin{align}
X(t)=\left(\frac{k-i_1}{k-1}, \frac{k-i_2}{k-1},\ldots,\frac{k-i_n}{k-1}\right),\label{eq-1}\\
Z(t)=\left(\frac{k-j_1}{k-1}, \frac{k-j_2}{k-1},\ldots,\frac{k-j_n}{k-1}\right),\label{eq-2}
\end{align}
where $i_l, j_l\in\{1,2,\ldots,k\}$, $l=1,2,\ldots,n$.
From Definition \ref{def1}, \eqref{eqdef1} and \eqref{eqdef2} implies that $|i_l-j_l|\leq 1$, $l=1,2,\ldots,n$ when $t\geq\rho$.
Let
\begin{equation}\label{eq1}
\begin{cases}
 i=i_n+\sum_{l=1}^{n-1}(i_{n-l}-1)k^l,\\
 j=j_n+\sum_{l=1}^{n-1}(j_{n-l}-1)k^l,
\end{cases}
\end{equation}
where $|i_l-j_l|\leq 1$, $i_l, j_l\in\{1,2,\ldots,k\}$, $l=1,2,\ldots,n$.
Construct
\begin{equation}\label{eq2}
\Lambda=\{\delta_{k^{2n}}^{(i-1)k^n+j}: ~i,j~\text{satisfy}~\eqref{eq1}\}.
\end{equation}
$\Lambda$ is called the $approximate~synchronous~state~set$ of system \eqref{sys5}.
Then the following result is immediate.
\begin{corollary}\label{th0}
Consider systems \eqref{sys1} and \eqref{sys2} with the augmented system \eqref{sys5}.
\eqref{sys1} and \eqref{sys2} are approximately synchronous with respect to $\Psi$, if and only if there exists an  integer $\rho\in\mathbb{Z_+}$ such that $\xi(t;\xi_0)\in\Lambda$ holds for any $t\geq\rho$ and $\xi_0\in\Phi$.
\end{corollary}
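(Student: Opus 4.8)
The plan is to prove the statement by a direct translation between the scalar formulation in Definition~\ref{def1} and the vector (Cheng-product) formulation, with the index bookkeeping of \eqref{eq1} doing all the work. Since $\Psi\sim\Phi$ by definition, it suffices to show that, for a fixed pair of initial states $(X_0,Z_0)\in\mathcal{D}_k^{2n}$ with vector form $\xi_0=X_0Z_0\in\Delta_{k^{2n}}$ and a fixed time $t$, the inequality $\max_{1\leq l\leq n}|x_l(t;X_0,Z_0)-z_l(t;X_0,Z_0)|\leq\frac{1}{k-1}$ is equivalent to $\xi(t;\xi_0)\in\Lambda$; the quantifier block ``$\exists\rho\in\mathbb{Z_+}$, $\forall t\geq\rho$, $\forall(X_0,Z_0)\in\Psi$'' then matches verbatim on both sides, and the corollary follows.

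First I would record the algebraic consistency of the augmented system: because $L=F(I_{k^{2n}}\otimes G)M_{r,k^{2n}}$ is built precisely so that $\xi(t)=x(t)z(t)$ for all $t$ (using Lemma~\ref{lem1}, 1) and 3)), the trajectory $\xi(t;\xi_0)$ factors as $x(t;X_0,Z_0)\,z(t;X_0,Z_0)$, where $x(\cdot)$ and $z(\cdot)$ are the vector forms of the scalar trajectories $X(\cdot;X_0,Z_0)$ and $Z(\cdot;X_0,Z_0)$. Next I would invoke the standing identification $\frac{k-i}{k-1}\sim\delta_k^i$: writing the states at time $t$ as in \eqref{eq-1}--\eqref{eq-2}, one gets $x_l(t)\sim\delta_k^{i_l}$ and $z_l(t)\sim\delta_k^{j_l}$, hence, by the standard index formula for iterated Cheng products, $x(t)\sim\delta_{k^n}^{i}$ and $z(t)\sim\delta_{k^n}^{j}$ with $i,j$ as in \eqref{eq1}, and therefore $\xi(t)=x(t)z(t)\sim\delta_{k^{2n}}^{(i-1)k^n+j}$.

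The core step is then a one-line computation: for each $l$,
\begin{equation}
|x_l(t)-z_l(t)|=\left|\frac{k-i_l}{k-1}-\frac{k-j_l}{k-1}\right|=\frac{|i_l-j_l|}{k-1},
\end{equation}
so that $\max_{1\leq l\leq n}|x_l(t)-z_l(t)|\leq\frac{1}{k-1}$ holds if and only if $|i_l-j_l|\leq 1$ for every $l=1,2,\ldots,n$. By the construction \eqref{eq1}--\eqref{eq2}, this last condition is exactly the membership criterion $\delta_{k^{2n}}^{(i-1)k^n+j}\in\Lambda$, i.e.\ $\xi(t;\xi_0)\in\Lambda$. Combining this equivalence with the quantifier matching described above and with $\Psi\sim\Phi$ yields the claim.

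I expect no genuine obstacle here; the one point requiring care is that the index pair $(i,j)$ attached to $\xi(t;\xi_0)$ is indeed the pair read off from the scalar components $x_l(t),z_l(t)$ --- that is, that passing to the augmented state commutes with extracting the $k$-ary digit pattern of the state. This is guaranteed by the construction of $L$ together with the injectivity of the Cheng-product index encoding on $\Delta_{k^n}\times\Delta_{k^n}$, so the argument reduces to the elementary bookkeeping sketched above.
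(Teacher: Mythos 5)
Your proposal is correct and follows exactly the route the paper intends: the paper states this corollary as ``immediate'' from the construction of $\Lambda$ in \eqref{eq-1}--\eqref{eq2}, and your argument simply writes out that bookkeeping (the equivalence $\max_l|x_l-z_l|\leq\frac{1}{k-1}\Leftrightarrow|i_l-j_l|\leq1\Leftrightarrow\xi(t;\xi_0)\in\Lambda$, plus the matching of quantifiers under $\Psi\sim\Phi$). No discrepancy to report.
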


The following  proposition illustrates the number of elements in $\Lambda$.
\begin{proposition}\label{pro1}
The cardinality of $\Lambda$ is
\begin{equation}\label{eqpro1}
|\Lambda|=\sum_{l=0}^n\mathbf{C}_n^l3^l(k-2)^l2^{2(n-l)},
\end{equation}
where $\mathbf{C}_n^l$ represents the number of all combinations of taking $l$ elements from $n$ different elements.
\end{proposition}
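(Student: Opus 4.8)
The plan is to turn the computation of $|\Lambda|$ into a coordinatewise combinatorial count. First I would note that the formulas in \eqref{eq1} are nothing but the mixed-radix (base $k$) encodings: rewriting the first one as $i-1=\sum_{m=1}^{n}(i_m-1)k^{\,n-m}$ shows that $(i_1,\dots,i_n)\mapsto i$ is a bijection from $\{1,\dots,k\}^n$ onto $\{1,\dots,k^n\}$, and likewise for $(j_1,\dots,j_n)\mapsto j$; consequently $(i,j)\mapsto(i-1)k^n+j$ is a bijection from $\{1,\dots,k^n\}^2$ onto $\{1,\dots,k^{2n}\}$. Hence distinct admissible pairs of tuples yield distinct superscripts, so no collapsing occurs in \eqref{eq2} and
\[
|\Lambda|=\bigl|\{(\mathbf i,\mathbf j)\in(\{1,\dots,k\}^n)^2:\ |i_l-j_l|\le1\ \text{for all }l\}\bigr|,
\]
where $\mathbf i=(i_1,\dots,i_n)$ and $\mathbf j=(j_1,\dots,j_n)$.

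Since the constraint $|i_l-j_l|\le1$ couples only the $l$-th entries and is imposed independently over $l$, the admissible set is the Cartesian product over $l$ of the one-coordinate sets $\{(i_l,j_l):|i_l-j_l|\le1\}$, so $|\Lambda|$ is the product of their sizes. I would evaluate each such set by distinguishing two cases for $i_l$: if $i_l\in\{1,k\}$ (a boundary value, $2$ choices) then exactly $2$ values of $j_l$ are admissible, whereas if $i_l\in\{2,\dots,k-1\}$ (an interior value, $k-2$ choices) then exactly $3$ values of $j_l$ are admissible. Grouping the $n$ coordinates according to the subset on which $i_l$ takes an interior value — there are $\mathbf{C}_n^l$ ways to choose such a subset of size $l$ — each interior coordinate contributes a factor $3(k-2)$ and each of the remaining $n-l$ coordinates a factor $2\cdot2=2^2$, which gives
\[
|\Lambda|=\sum_{l=0}^{n}\mathbf{C}_n^l\,[3(k-2)]^{l}\,(2^2)^{\,n-l}=\sum_{l=0}^{n}\mathbf{C}_n^l\,3^l(k-2)^l\,2^{2(n-l)},
\]
which is \eqref{eqpro1}. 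As a consistency check, the binomial theorem collapses the right-hand side to $\bigl(3(k-2)+2^2\bigr)^n=(3k-2)^n$, matching the alternative count in which each coordinate contributes the same number $3k-2$ of admissible pairs $(i_l,j_l)$ (namely $k$ with $i_l=j_l$ plus $2(k-1)$ with $|i_l-j_l|=1$).

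The argument is essentially bookkeeping once this reduction is in place; the only step that genuinely needs to be spelled out — and hence the main point of care rather than a real obstacle — is the first one, i.e. that the encoding in \eqref{eq1}--\eqref{eq2} is injective on pairs of tuples, so that counting admissible pairs really counts distinct elements of $\Lambda$. This is just uniqueness of the base-$k$ representation. A mild thing to keep in mind afterwards is that the interior range $\{2,\dots,k-1\}$ enters the count, which is why the standing hypothesis $k>2$ is natural here, although the identity would remain formally valid for $k=2$ with only the $l=0$ term surviving.
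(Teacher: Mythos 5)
Your proof is correct and follows essentially the same route as the paper's: the same coordinatewise case split ($2$ choices of $j_l$ when $i_l\in\{1,k\}$, $3$ choices when $1<i_l<k$) and the same grouping of coordinates by the number $l$ of interior values, yielding $\mathbf{C}_n^l(k-2)^l2^{n-l}3^l2^{n-l}$ per class. Your explicit verification that the base-$k$ encoding is injective (so no collapsing occurs in $\Lambda$) and the closed-form check $(3k-2)^n$ are small additions the paper omits, but the argument is the same.
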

\begin{proof}
It is noted that with the representation \eqref{eq-1} and \eqref{eq-2}, $i_l$ and $j_l$ need to meet the condition of $|i_l-j_l|\leq1$, $i_l,j_l\in\{1,2,\ldots,k\}$, $l=1,2,\ldots,n$. Therefore, for each $i_l$, $l=1,2,\ldots,n$, if $i_l=1$ or $i_l=k$, then $j_l$ has two possible cases, i.e., $j_l\in\{1,2\}$ or $j_l\in\{k-1,k\}$. If $1<i_l<k$, then $j_l$ has three possible cases, which means $j_l\in\{i_l-1,i_l,i_l+1\}$.
According to \eqref{eq1}, $i$ is determined by $i_1, i_2, \ldots, i_n$, and $j$ is determined by $j_1, j_2, \ldots, j_n$. Suppose that there are $\theta$ components in $i_1, i_2, \ldots, i_n$  satisfying $1<i_l<k$, then the value of remaining $n-\theta$ components are $1$ or $k$. At this time, the number of elements included in $\Lambda$ is
\begin{align*}
&\mathbf{C}_n^\theta(k-2)^\theta 2^{(n-\theta)}3^\theta2^{(n-\theta)}\\
=&\mathbf{C}_n^\theta3^\theta(k-2)^\theta 2^{2(n-\theta)}.
\end{align*}
When $\theta$ goes through $0,1,2,\ldots,n$, all elements of $\Lambda$ are obtained. Hence, \eqref{eqpro1} is clear.
\end{proof}

As Theorem \ref{th0} shows that the approximate synchronous state set $\Lambda$ plays an important role in studying approximate synchronization problem. The subsequent theoretical results of this paper are mainly proposed based on $\Lambda$.

\section{Main Results}\label{sec4}

In this section, we study the approximate synchronization of systems \eqref{sys1} and \eqref{sys2}, and present the main results of this paper. First, some criteria for approximate synchronization are proposed. Second, the methods for calculating the MASB and the shortest approximate synchronization time are presented.  Third, pinning control strategy is considered to make sure that the two coupled MVLNs achieve global approximate synchronization.

\subsection{Approximate Synchronization Conditions}\label{sec4-1}

In this subsection, several criteria for approximate synchronization are presented.

Notice that the state space $\Delta_{k^{2n}}$ of system \eqref{sys5} is finite. The trajectory of system \eqref{sys5} starting from any initial state will eventually stay in some states or cycles\cite{Chengdaizhan2011c, cheng2010}. So, we introduce the following definition firstly.
\begin{definition}{\rm\cite{cheng2010}}
Consider system \eqref{sys5}.
\begin{enumerate}[1)]
\item A state $\xi\in\Delta_{k^{2n}}$ is called a fixed point of system \eqref{sys5}, if $L\xi=\xi$.
\item $\{\xi_0, \xi_1,\ldots,\xi_l\}$ is called a limit cycle of system \eqref{sys5} with length $l+1$, if $\xi_{i+1}=L\xi_i$, $i=0,1,\ldots,l-1$ and $L\xi_l=\xi_0$.
\end{enumerate}
\end{definition}

Both fixed point and limit cycle are called $attractor$. Denote by $\Omega$ the union of all attractors of system \eqref{sys5}.
For a state $\xi_0$, its $transient$ $period$, denoted by $\tau_{\xi_0}$, is the smallest integer $t$ such that $\xi(t;\xi_0)\in\Omega$, where $\Omega_{\xi_0}:=\{\xi(t;\xi_0): t\geq\tau_{\xi_0}\}$ is called the attractor of $\xi_0$.
For a state set $\Phi\subseteq\Delta_{k^{2n}}$,  $\Omega_\Phi:=\mathop{\bigcup_{\xi_0\in\Phi}}\Omega_{\xi_0}$ represents the set of all attractors of $\Phi$.
The transient period of system \eqref{sys5}, denoted by $\tau$, is defined as:
\begin{equation}\label{eq3}
  \tau=\max_{\xi\in\Delta_{k^{2n}}}\tau_{\xi}.
\end{equation}
Assume
$\Omega=\{\mathcal{C}_1,\mathcal{C}_2,\ldots,\mathcal{C}_\epsilon\}$,
where the length of $\mathcal{C}_i$ is denoted by $l_i$, $i\in\{1,2,\ldots,\epsilon\}$.
(Using STP toolbox\footnote{http://lsc.amss.ac.cn/$\sim$dcheng/stp/STP} in Matlab, the attractors of a given logical network is easy to obtained.)

The following lemma, introduced in \cite{cheng2010} firstly, is useful in analyzing the approximate synchronization problem.
\begin{lemma}\label{lem3}{\rm\cite{cheng2010}}
Consider system \eqref{sys5}. The transition matrix $L$ satisfies
$L^\tau=L^{\tau+\lambda}$, where $\tau$ is the transient period of system \eqref{sys5} and $\lambda=lcm\{l_1,l_2,\ldots,l_\epsilon\}$ is the least common multiple of $l_1,l_2,\ldots,l_\epsilon$.
\end{lemma}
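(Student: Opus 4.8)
The plan is to exploit the finiteness of the state space $\Delta_{k^{2n}}$ together with the deterministic dynamics $\xi(t+1)=L\xi(t)$, so that the orbit of every vertex $\delta_{k^{2n}}^s$ under $L$ is eventually periodic. First I would observe that since $L$ is a logic matrix, each column $Col_s(L)=L\delta_{k^{2n}}^s$ is again a standard basis vector, so for every $s$ the sequence $\delta_{k^{2n}}^s, L\delta_{k^{2n}}^s, L^2\delta_{k^{2n}}^s,\dots$ takes values in the finite set $\Delta_{k^{2n}}$ and hence must revisit a state; from the index $\tau_{\delta_{k^{2n}}^s}$ onward the orbit lies on one of the attractors $\mathcal{C}_1,\dots,\mathcal{C}_\epsilon$, whose lengths are $l_1,\dots,l_\epsilon$. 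Because $\lambda=\mathrm{lcm}\{l_1,\dots,l_\epsilon\}$ is a common multiple of every cycle length, applying $L^\lambda$ to any state already on an attractor returns it to itself; that is, $L^{\lambda}\eta=\eta$ for every $\eta\in\Omega$.

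Next I would fix an arbitrary column index $s$ and set $t_s:=\tau_{\delta_{k^{2n}}^s}$, the transient period of that initial state, so that $L^{t_s}\delta_{k^{2n}}^s\in\Omega$. For any $t\geq t_s$ we then have $L^{t}\delta_{k^{2n}}^s\in\Omega$, and by the previous paragraph $L^{\lambda}\bigl(L^{t}\delta_{k^{2n}}^s\bigr)=L^{t}\delta_{k^{2n}}^s$, i.e. $L^{t+\lambda}\delta_{k^{2n}}^s=L^{t}\delta_{k^{2n}}^s$. In particular, taking $t=\tau$ (which satisfies $\tau\geq t_s$ by the definition \eqref{eq3} of the transient period of the system) yields $L^{\tau+\lambda}\delta_{k^{2n}}^s=L^{\tau}\delta_{k^{2n}}^s$.

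Finally, since $s$ was arbitrary, the matrices $L^{\tau+\lambda}$ and $L^{\tau}$ agree column by column, hence $L^{\tau}=L^{\tau+\lambda}$, which is the claim. The only genuinely delicate point is the bookkeeping around the two different notions of transient period — the per-state $\tau_{\xi_0}$ versus the global $\tau=\max_{\xi}\tau_{\xi}$ — and making sure that evaluating at the global $\tau$ is legitimate for every initial vertex simultaneously; this is immediate from \eqref{eq3}, so I do not expect a real obstacle, only a need to state the quantifiers carefully. (One could alternatively phrase the whole argument in terms of the Frobenius/Perron structure of the $0$–$1$ matrix $L$, but the column-wise orbit argument above is the most transparent and is the line taken in \cite{cheng2010}.)
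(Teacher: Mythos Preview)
Your argument is correct: the eventual periodicity of each orbit in the finite state space $\Delta_{k^{2n}}$, the fact that $L^{\lambda}$ fixes every point of $\Omega$ because $\lambda$ is a common multiple of all cycle lengths, and the column-by-column comparison using $\tau\geq\tau_{\delta_{k^{2n}}^s}$ together yield $L^{\tau}=L^{\tau+\lambda}$ exactly as you wrote. There is no gap.

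As for comparison with the paper: the present paper does \emph{not} supply a proof of Lemma~\ref{lem3}; it is quoted from \cite{cheng2010} and used as a black box. Your write-up is essentially the argument given in that reference (and in the monograph \cite{Chengdaizhan2011c}), namely the orbit/attractor bookkeeping rather than any spectral reasoning, so there is no methodological divergence to discuss.
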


Based on Lemma \ref{lem3}, the following theorem presents two necessary and sufficient conditions for the approximate synchronization problem.
\begin{theorem}\label{th1}
Consider systems \eqref{sys3} and \eqref{sys4}. Suppose $\tau$ is the transient period of system \eqref{sys5}.
\begin{enumerate}[1)]
\item Systems \eqref{sys3} and \eqref{sys4} are approximately synchronous with respect to $\Phi\subseteq\Delta_{k^{2n}}$, if and only if
\begin{equation}\label{eq4}
  \bigcup_{t=\tau}^{\tau+\lambda-1}\{Col_i(L^t): \delta_{k^{2n}}^i\in\Phi\}\subseteq\Lambda,
\end{equation}
where  $\lambda$ is defined in Lemma \ref{lem3}.
\item  Systems \eqref{sys3} and \eqref{sys4} are globally approximately synchronous, if and only if
\begin{equation}
 Col(L^\tau)\subseteq\Lambda.
\end{equation}
\end{enumerate}
\end{theorem}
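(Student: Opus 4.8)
The plan is to derive both assertions from Corollary~\ref{th0}, which already recasts approximate synchronization as the requirement that there exist some $\rho\in\mathbb{Z_+}$ with $\xi(t;\xi_0)\in\Lambda$ for every $t\ge\rho$ and every $\xi_0\in\Phi$, together with the eventual periodicity of the transition matrix supplied by Lemma~\ref{lem3}. The observation that drives everything is that $L^{\tau}=L^{\tau+\lambda}$ gives, upon multiplication by $L^{m}$, the identity $L^{\tau+m}=L^{\tau+m+\lambda}$ for all $m\ge 0$; writing an arbitrary $t\ge\tau$ as $t=\tau+q\lambda+r$ with $0\le r<\lambda$ then yields $L^{t}=L^{\tau+r}$. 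Hence, for a fixed $\xi_0=\delta_{k^{2n}}^{i}\in\Phi$, the tail of the trajectory satisfies
\begin{equation*}
\{\xi(t;\xi_0):t\ge\tau\}=\{\,Col_i(L^t):\tau\le t\le\tau+\lambda-1\,\},
\end{equation*}
a finite set that is revisited infinitely often.

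For part 1) I would argue both implications. ($\Leftarrow$) If \eqref{eq4} holds, then for every $\xi_0=\delta_{k^{2n}}^i\in\Phi$ each of $Col_i(L^\tau),\dots,Col_i(L^{\tau+\lambda-1})$ lies in $\Lambda$; by the displayed identity this is precisely the set of states visited by $\xi(\cdot;\xi_0)$ from time $\tau$ onward, so taking $\rho=\tau$ in Corollary~\ref{th0} gives approximate synchronization. ($\Rightarrow$) Conversely, assume approximate synchronization holds and pick the $\rho$ from Corollary~\ref{th0}; replacing $\rho$ by $\max\{\rho,\tau\}$ we may assume $\rho\ge\tau$. By eventual periodicity, any state attained at some time $\ge\tau$ is attained again at some time $\ge\rho$ (add a suitable multiple of $\lambda$), so $\{\xi(t;\xi_0):t\ge\tau\}=\{\xi(t;\xi_0):t\ge\rho\}\subseteq\Lambda$ for each $\xi_0\in\Phi$; reading this through the displayed identity yields \eqref{eq4}.

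For part 2) I would specialize $\Phi=\Delta_{k^{2n}}$, so that \eqref{eq4} reads $\bigcup_{t=\tau}^{\tau+\lambda-1}Col(L^t)\subseteq\Lambda$, and then collapse the union to $Col(L^\tau)$. The point is that the column sets form a nonincreasing chain: since $L$ is a logical matrix, $L\delta_{k^{2n}}^i=\delta_{k^{2n}}^j$ for some $j$, whence $Col_i(L^{t+1})=L^t(L\delta_{k^{2n}}^i)=Col_j(L^t)$ and therefore $Col(L^{t+1})\subseteq Col(L^t)$ for every $t$. Combining this with $Col(L^{\tau+\lambda})=Col(L^\tau)$ from Lemma~\ref{lem3} forces $Col(L^\tau)=Col(L^{\tau+1})=\dots=Col(L^{\tau+\lambda-1})$, so the union equals $Col(L^\tau)$ and part 2) follows from part 1).

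The index bookkeeping is routine; the step I expect to need the most care is the ($\Rightarrow$) direction of part 1), where the $\rho$ furnished by Definition~\ref{def1} need not coincide with the structural transient period $\tau$, so one must invoke eventual periodicity to bridge the gap — and, for part 2), the monotonicity $Col(L^{t+1})\subseteq Col(L^t)$, which is exactly what lets a statement over the whole cycle $\tau,\dots,\tau+\lambda-1$ reduce to the single power $L^\tau$.
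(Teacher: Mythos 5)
Your proposal is correct and follows essentially the same route as the paper: both arguments rest on Lemma~\ref{lem3} to identify the tail $\{\xi(t;\xi_0):t\ge\tau\}$ with $\{Col_i(L^t):\tau\le t\le\tau+\lambda-1\}$ and then invoke Corollary~\ref{th0}, the only cosmetic difference being that the paper phrases necessity as a contradiction via an attractor $\mathcal{C}_l\nsubseteq\Lambda$ while you argue directly through set equality. Your explicit verification that $Col(L^{t+1})\subseteq Col(L^t)$ forces the union in part 2) to collapse to $Col(L^\tau)$ is a detail the paper simply asserts (as $\Omega=Col(L^\tau)$), and it is a welcome addition.
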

\begin{proof}
See Appendix.
\end{proof}

In order to give a more concise criterion, we define the index vectors of sets $\Phi$ and $\Lambda$ in the form of
\begin{align}
\Xi_1=\sum_{\delta_{k^{2n}}^i\in\Phi}\delta_{k^{2n}}^i\in\mathcal{B}_{k^{2n}\times1},\label{eq7}\\
\Xi_2=\sum_{\delta_{k^{2n}}^i\in\Lambda}\delta_{k^{2n}}^i\in\mathcal{B}_{k^{2n}\times1}.\label{eq7-1}
\end{align}

Subsequently, it is not hard to get the following conclusions in light of Theorem \ref{th1}, which is effective to determine whether two systems are approximately synchronous.
\begin{corollary}\label{th2}
Consider systems \eqref{sys3} and \eqref{sys4}. Suppose $\tau$ is the  transient period of system \eqref{sys5}.
\begin{enumerate}[1)]
\item  Systems \eqref{sys3} and \eqref{sys4} are approximately synchronous with respect to $\Phi$, if and only if
\begin{equation}\label{eqth2}
sgn\left(\left(\sum_{t=\tau}^{\tau+\lambda-1}L^t\right)\Xi_1\right)\leq\Xi_2,
\end{equation}
where  $\lambda$ is defined in Lemma \ref{lem3}.
\item Systems \eqref{sys1} and \eqref{sys2} are globally approximate synchronous, if and only if
\begin{equation}
sgn\left(L^\tau\mathbf{1}_{k^{2n}}\right)\leq\Xi_2.
\end{equation}
\end{enumerate}
\end{corollary}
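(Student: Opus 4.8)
The plan is to derive Corollary~\ref{th2} directly from Theorem~\ref{th1} by rewriting the set-inclusion conditions \eqref{eq4} and $Col(L^\tau)\subseteq\Lambda$ in terms of the Boolean index vectors $\Xi_1$ and $\Xi_2$. The key observation is a standard translation between ``a column of a logic matrix lies in a prescribed set $\Lambda$'' and a componentwise inequality involving $\Xi_2$. Concretely, for any $\delta_{k^{2n}}^i$ we have $\delta_{k^{2n}}^i\in\Lambda$ if and only if the $i$-th entry of $\Xi_2$ equals $1$, and for a nonnegative vector $v\in\mathbb{R}^{k^{2n}\times1}$, $sgn(v)\leq\Xi_2$ exactly says that every index $i$ with $v_i>0$ satisfies $\delta_{k^{2n}}^i\in\Lambda$.

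First I would prove part~2). Since $L^\tau$ is a logic matrix, $L^\tau\mathbf{1}_{k^{2n}}$ is a nonnegative integer vector whose $i$-th entry counts how many columns of $L^\tau$ equal $\delta_{k^{2n}}^i$; hence its support is exactly $\{i: \delta_{k^{2n}}^i\in Col(L^\tau)\}$. Therefore $sgn(L^\tau\mathbf{1}_{k^{2n}})\leq\Xi_2$ holds if and only if every $\delta_{k^{2n}}^i\in Col(L^\tau)$ lies in $\Lambda$, i.e.\ $Col(L^\tau)\subseteq\Lambda$, which by part~2) of Theorem~\ref{th1} is equivalent to global approximate synchronization.

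Next I would prove part~1) along the same lines, but keeping track of the restriction to $\Phi$. Note that $L^t\Xi_1=\sum_{\delta_{k^{2n}}^i\in\Phi}Col_i(L^t)$, so $\big(\sum_{t=\tau}^{\tau+\lambda-1}L^t\big)\Xi_1=\sum_{t=\tau}^{\tau+\lambda-1}\sum_{\delta_{k^{2n}}^i\in\Phi}Col_i(L^t)$ is a nonnegative vector whose support is precisely $\bigcup_{t=\tau}^{\tau+\lambda-1}\{\,j:\ \delta_{k^{2n}}^j=Col_i(L^t)\text{ for some }\delta_{k^{2n}}^i\in\Phi\,\}$, i.e.\ the index set of $\bigcup_{t=\tau}^{\tau+\lambda-1}\{Col_i(L^t):\delta_{k^{2n}}^i\in\Phi\}$. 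Consequently $sgn\big(\big(\sum_{t=\tau}^{\tau+\lambda-1}L^t\big)\Xi_1\big)\leq\Xi_2$ is equivalent to the inclusion \eqref{eq4}, and the claim follows from part~1) of Theorem~\ref{th1}.

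The argument is essentially bookkeeping, so there is no serious obstacle; the only point requiring a little care is the claim that taking $sgn$ of a sum of (possibly many) columns does not lose information about which columns appeared --- this is where nonnegativity of all the summands is essential, since no cancellation can occur, so the support of the sum equals the union of the supports. I would state this explicitly as the crux of the translation and then let parts~1) and 2) of Corollary~\ref{th2} drop out of parts~1) and 2) of Theorem~\ref{th1}.
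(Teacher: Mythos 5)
Your proposal is correct and follows essentially the same route as the paper: rewrite $L^t\Xi_1$ as $\sum_{\delta_{k^{2n}}^i\in\Phi}Col_i(L^t)$, observe that the support of the (nonnegative) sum is the index set of the union of attractor columns, translate the inclusion in Theorem~\ref{th1} into the componentwise inequality against $\Xi_2$, and conclude. The only cosmetic difference is the order of the two parts and your explicit remark about no cancellation, which the paper leaves implicit.
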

\begin{proof}
See Appendix.
\end{proof}

\subsection{The Maximum Approximate Synchronization Basin and the Shortest Approximate Synchronization Time}\label{sec4-2}

In this subsection, we provide an approach to find the MASB $\Phi_{\max}$. Besides, the issue of the shortest approximate synchronization time is solved.

Combined with Corollary \ref{th2}, an algorithm (Algorithm 1) is derived to calculate the MASB $\Phi_{\max}$. (Please see Appendix for the proof of the validity of Algorithm 1.)
\begin{algorithm}[H]\label{alg0}
\caption{Calculate the MASB $\Phi_{\max}$.}
\begin{algorithmic}[1]
\Require $L$, $\Lambda$
\Ensure $\Phi_{max}$
\State Initialize $\Phi_{\max}=\Delta_{k^{2n}}$;
\State Compute the transient period $\tau$  by Lemma \ref{lem3};
\State Compute the index vector $\Xi_2$ of $\Lambda$ by \eqref{eq7-1};
\For{$i=1 \to k^{2n}$}
\State Compute $\sum_{t=\tau}^{\tau+\lambda-1}Col_i(L^t)$;
\If {$sgn\left(\sum_{t=\tau}^{\tau+\lambda-1}Col_i\left(L^t\right)\right)\nleq\Xi_2$}
\State $\Phi_{\max}=\Phi_{\max}\setminus\{\delta_{k^{2n}}^i\}$;
\EndIf
\EndFor
\end{algorithmic}
\end{algorithm}

\begin{remark}
In Algorithm 1, the computational complexity of lines 2 and 3 are $O(k^{6n})$ and $O(k^{2n})$, respectively. And the computational complexity of the ``for" loop in lines 4-9 is $O(k^{8n})$. Therefore, the computational complexity of Algorithm 1 is $O(k^{8n})$.
\end{remark}
In light of the definition of $\Phi_{\max}$, the following conclusions are clear.
\begin{corollary}\label{cor1}
Consider  systems \eqref{sys3} and \eqref{sys4}.
\begin{enumerate}[1)]
  \item Systems \eqref{sys3} and \eqref{sys4} are not approximately synchronous with respect to any initial state set if and only if $\Phi_{\max}=\emptyset$.
  \item Systems \eqref{sys3} and \eqref{sys4} are globally approximately synchronous if and only if $\Phi_{\max}=\Delta_{k^{2n}}$.
\end{enumerate}
\end{corollary}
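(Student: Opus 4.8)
The plan is to deduce both equivalences from a single structural fact: $\Phi_{\max}$ is the \emph{largest} approximate synchronization basin, meaning that systems \eqref{sys3} and \eqref{sys4} are approximately synchronous with respect to a set $\Phi\subseteq\Delta_{k^{2n}}$ if and only if $\Phi\subseteq\Phi_{\max}$. First I would justify this fact. By part 1) of Theorem \ref{th1}, $\Phi$ is an approximate synchronization basin exactly when every index $i$ with $\delta_{k^{2n}}^i\in\Phi$ satisfies $Col_i(L^t)\in\Lambda$ for all $\tau\leq t\leq\tau+\lambda-1$; this requirement is downward closed in $\Phi$ and preserved under arbitrary unions, so the union of all approximate synchronization basins is again one, and it coincides with the set returned by Algorithm 1 (whose validity is established in the Appendix). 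Hence the ``largest basin'' characterization holds, and everything else is an unwinding of Definition \ref{def1}.

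Given this, for 1) I would read ``approximately synchronous with respect to any initial state set'' as ``with respect to some nonempty set'', since the empty set is vacuously a basin. If $\Phi_{\max}=\emptyset$, then no nonempty $\Phi$ is contained in it, so no nonempty $\Phi$ is a basin; conversely, if $\Phi_{\max}\neq\emptyset$, then $\Phi_{\max}$ itself is a nonempty basin. For 2) I would note that $\mathcal{D}_k^{2n}\sim\Delta_{k^{2n}}$, so part 2) of Definition \ref{def1} says global approximate synchronization is precisely approximate synchronization with respect to $\Delta_{k^{2n}}$, i.e. $\Delta_{k^{2n}}\subseteq\Phi_{\max}$; since always $\Phi_{\max}\subseteq\Delta_{k^{2n}}$, this is equivalent to $\Phi_{\max}=\Delta_{k^{2n}}$. (Alternatively one could invoke part 2) of Theorem \ref{th1} and observe that $Col(L^\tau)\subseteq\Lambda$ holds iff every index survives the test in Algorithm 1.)

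I expect the only real obstacle to be presentational: pinning down the phrase ``not approximately synchronous with respect to any initial state set'' so that the trivial basin $\Phi=\emptyset$ does not make the statement false. Once we agree to restrict to nonempty initial state sets, the corollary reduces to the maximality of $\Phi_{\max}$ together with Definition \ref{def1}, and no substantive estimate is required.
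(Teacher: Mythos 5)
Your proposal is correct and follows essentially the same route as the paper, which states the corollary as immediate ``in light of the definition of $\Phi_{\max}$'' without further argument; you simply make explicit the maximality of $\Phi_{\max}$ (via Theorem \ref{th1} and Algorithm 1) that the paper leaves implicit. Your remark about excluding the vacuous empty basin in part 1) is a sensible clarification of the intended reading and does not change the substance.
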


In the following, we introduce the concept of maximum invariant subset, which is useful to further describe the relationship with $\Phi_{\max}$, calculate the shortest approximate synchronization time, and design the pinning controllers.

A set $S\subseteq\Delta_{k^{2n}}$ is called an $invariant$ $subset$ of system \eqref{sys5}, if $\xi(t;\xi_0)\in S$ holds for any $t\geq0$ and $\xi_0\in S$.
It is clear  that  $S$ is an invariant subset of system \eqref{sys5} if and only if $L_S\subseteq S$, where $L_S:=\{Col_i(L): \delta_{k^{{2n}}}^i\in S\}$.
Note that the union of two invariant subsets is still an invariant subset. The union of all invariant subsets contained in a given set $M$ is called the $maximum$ $invariant$ $subset$ of $M$, denoted by $\mathcal{I}_m^\Lambda$.

Based on the analysis above, an algorithm (Algorithm 2) is  put forward to calculate the maximum invariant subset of the approximate synchronization state set $\Lambda$.
(Please see Appendix for the proof of the validity of Algorithm 2.)
\begin{algorithm}
\caption{Calculate the maximum invariant subset $\mathcal{I}_m^\Lambda$.}
\begin{algorithmic}[1]
\Require  $\Lambda$, $L$
\Ensure $\mathcal{I}_m^\Lambda$
\State Initialize $\mathcal{I}_m^\Lambda=\Lambda$;
\While{$\delta_{k^{2n}}^i\in\Lambda$}
\If {$Col_i(L)\notin\Lambda$}
\State $\mathcal{I}_m^\Lambda=\mathcal{I}_m^\Lambda\setminus\{\delta_{k^{2n}}^i\}$;
\EndIf
\EndWhile
\end{algorithmic}
\end{algorithm}
\begin{remark}
The computational complexity of Algorithm 2 is $O(|\Lambda|k^{2n})$, $|\Lambda|\leq k^{2n}$.
Compared with the method proposed in reference \cite{guoyuqian2015Set}, its computational complexity is $O(|\Lambda|k^{6n})$ for calculating the maximum invariant subset of  $\Lambda$. Obviously, our approach can dramatically reduce computational complexity.
\end{remark}

Consider the approximately synchronous state set $\Lambda$ and a given initial state set $\Phi\subseteq\Delta_{k^{2n}}$. It is clear from Theorem \ref{th1}  that the key to approximate synchronization with respect to $\Phi$ is to guarantee that all attractors of $\Phi$ belong to $\Lambda$, i.e., $\Omega_\Phi\subseteq\Lambda$.
In fact, if $\Omega_\Phi\subseteq\Lambda$, then we can conclude that $\Omega_\Phi\subseteq \mathcal{I}_m^\Lambda$. Based on this, the following proposition is proposed  to illustrate the relationship between the MASB $\Phi_{\max}$ and $\Lambda$.
\begin{proposition}\label{pro3}
Consider the approximate synchronous state set $\Lambda$ and the maximum approximate synchronization basin $\Phi_{\max}$.
\begin{enumerate}[1)]
  \item $\Phi_{\max}=\emptyset$ if and only if $\mathcal{I}_m^\Lambda=\emptyset$.
  \item $\Phi_{\max}\neq\emptyset$ if and only if $\Omega_{\Phi_{\max}}\subseteq \mathcal{I}_m^\Lambda\neq\emptyset$.
\end{enumerate}
\end{proposition}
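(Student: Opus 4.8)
The plan is to prove both equivalences by exploiting the fact that the trajectory of the finite-state system \eqref{sys5} from any initial state enters $\Omega_{\xi_0}\subseteq\Omega$ after its transient period, and that, by Corollary \ref{th0}, approximate synchronization with respect to $\Phi$ is equivalent to $\xi(t;\xi_0)\in\Lambda$ for all large $t$ and all $\xi_0\in\Phi$; hence it is equivalent to $\Omega_\Phi\subseteq\Lambda$. For 1), I would argue the contrapositive in the ``only if'' direction: if $\mathcal{I}_m^\Lambda\neq\emptyset$, pick any $\xi_0\in\mathcal{I}_m^\Lambda$. Since $\mathcal{I}_m^\Lambda$ is an invariant subset contained in $\Lambda$, the whole trajectory $\{\xi(t;\xi_0):t\geq0\}$ stays in $\mathcal{I}_m^\Lambda\subseteq\Lambda$, so $\{\xi_0\}$ is an approximate synchronization basin and $\Phi_{\max}\supseteq\{\xi_0\}\neq\emptyset$. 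Conversely, if $\Phi_{\max}\neq\emptyset$, take $\xi_0\in\Phi_{\max}$; then $\Omega_{\xi_0}\subseteq\Lambda$, and $\Omega_{\xi_0}$ is itself an invariant subset of system \eqref{sys5} (an attractor is invariant since $L$ maps a limit cycle onto itself and a fixed point onto itself), hence $\Omega_{\xi_0}\subseteq\mathcal{I}_m^\Lambda$, so $\mathcal{I}_m^\Lambda\neq\emptyset$. This simultaneously yields the nonemptiness of $\mathcal{I}_m^\Lambda$ in 2).

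For 2), the ``if'' direction is essentially Theorem \ref{th1}: if $\Omega_{\Phi_{\max}}\subseteq\mathcal{I}_m^\Lambda\neq\emptyset$ then certainly $\Omega_{\Phi_{\max}}\subseteq\Lambda$, so every $\xi_0\in\Phi_{\max}$ is the start of a trajectory eventually trapped in $\Lambda$, whence $\Phi_{\max}\neq\emptyset$. For the ``only if'' direction, suppose $\Phi_{\max}\neq\emptyset$. Then for each $\xi_0\in\Phi_{\max}$ we have $\Omega_{\xi_0}\subseteq\Lambda$ by Corollary \ref{th0} (otherwise $\xi_0$ could not belong to any approximate synchronization basin). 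As noted above, each $\Omega_{\xi_0}$ is an invariant subset of \eqref{sys5}, so $\Omega_{\xi_0}\subseteq\mathcal{I}_m^\Lambda$; taking the union over $\xi_0\in\Phi_{\max}$ gives $\Omega_{\Phi_{\max}}=\bigcup_{\xi_0\in\Phi_{\max}}\Omega_{\xi_0}\subseteq\mathcal{I}_m^\Lambda$, and $\mathcal{I}_m^\Lambda\neq\emptyset$ by part 1).

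The only genuinely delicate point — and the step I would write out carefully — is the claim that an attractor $\Omega_{\xi_0}$ (a fixed point or a limit cycle, or more precisely the set of states visited infinitely often along the trajectory from $\xi_0$) is an invariant subset, i.e.\ $L(\Omega_{\xi_0})\subseteq\Omega_{\xi_0}$. This follows from the definition of limit cycle in Definition 6 together with the finiteness of $\Delta_{k^{2n}}$: once the trajectory reaches $\Omega_{\xi_0}$ at time $\tau_{\xi_0}$, all subsequent states lie in $\Omega_{\xi_0}$ by definition, and since $\xi(t+1;\xi_0)=L\xi(t;\xi_0)$, applying $L$ to any state of $\Omega_{\xi_0}$ produces another state of $\Omega_{\xi_0}$. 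Everything else is bookkeeping with the characterization $L_S\subseteq S\iff S$ invariant and the maximality of $\mathcal{I}_m^\Lambda$. I do not expect any serious obstacle beyond making this invariance claim precise.
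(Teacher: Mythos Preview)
Your proposal is correct and matches the paper's reasoning. The paper does not give a formal proof of Proposition~\ref{pro3}; it simply records the key observation in the paragraph preceding the statement, namely that $\Omega_\Phi\subseteq\Lambda$ implies $\Omega_\Phi\subseteq\mathcal{I}_m^\Lambda$ (since an attractor is invariant), together with Theorem~\ref{th1}. Your write-up makes this explicit and is essentially the intended argument.

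One small slip to fix: your ``if'' direction for part 2) is circular as written. From $\Omega_{\Phi_{\max}}\subseteq\mathcal{I}_m^\Lambda\neq\emptyset$ you deduce that every $\xi_0\in\Phi_{\max}$ has a trajectory eventually trapped in $\Lambda$, and conclude ``whence $\Phi_{\max}\neq\emptyset$''. But if $\Phi_{\max}=\emptyset$ that statement about ``every $\xi_0\in\Phi_{\max}$'' is vacuously true and yields nothing. The clean argument is the one you already have in hand: the hypothesis gives $\mathcal{I}_m^\Lambda\neq\emptyset$, so by part 1) (or directly, by picking any $\xi_0\in\mathcal{I}_m^\Lambda$ and noting its whole trajectory stays in $\Lambda$) you get $\Phi_{\max}\neq\emptyset$. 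With that correction the proof is complete.
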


Next, we investigate the shortest approximate synchronization time.
Note that in vector form, $\Phi\sim\Psi$, so $\Gamma_{\Phi}=\Gamma_{\Psi}$.
In view of Theorem \ref{th0}, it is clear that $\Gamma_{\Phi}$ is the smallest integer $\rho$ such that $\xi(t;\xi_0)\in\Lambda$ holds for any $t\geq\rho$ and $\xi_0\in\Phi$.
Similar to \eqref{eq7} and \eqref{eq7-1}, denote the index vector of  $\mathcal{I}_m^\Lambda$ by
$\Xi_3=\sum_{\xi \in \mathcal{I}_m^\Lambda}\xi$.
Based on these, the following result can accurately  determine the shortest approximate synchronization time.
\begin{theorem}\label{pro4}
Consider  systems \eqref{sys3} and \eqref{sys4}. Suppose $\tau$ is the  transient period of system \eqref{sys5}.
\begin{enumerate}[1)]
  \item If systems \eqref{sys3} and \eqref{sys4} are approximately synchronous with respect to $\Phi\subseteq\Delta_{k^{2n}}$, then the shortest approximate  synchronization time with respect to $\Phi$ is
      \begin{equation}\label{eqpro4-1}
      \Gamma_{\Phi}=\mathop{\arg\min}_{1\leq t\leq \tau}\{sgn(L^t\Xi_1)\leq\Xi_3\}.
      \end{equation}

  \item If systems \eqref{sys3} and \eqref{sys4} are globally approximately synchronous, then the globally shortest approximate synchronization time is
      \begin{equation}\label{eqpro4-2}
        \Gamma=\mathop{\arg\min}_{1\leq t\leq \tau}\{sgn(L^t\mathbf{1}_{k^{2n}})\leq\Xi_3\}.
     \end{equation}
\end{enumerate}
\end{theorem}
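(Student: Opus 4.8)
The plan is to reduce the determination of $\Gamma_{\Phi}$ to a single membership test against the maximum invariant subset $\mathcal{I}_m^\Lambda$, and then to rewrite that test as the Boolean inequality \eqref{eqpro4-1}. By Corollary~\ref{th0} (in its vector form), $\Gamma_{\Phi}$ is the smallest $\rho\in\mathbb{Z_+}$ such that $\xi(t;\xi_0)\in\Lambda$ for all $t\geq\rho$ and all $\xi_0\in\Phi$. Hence the central step is the equivalence: for a fixed $\rho\in\mathbb{Z_+}$, one has $\xi(t;\xi_0)\in\Lambda$ for all $t\geq\rho$ and all $\xi_0\in\Phi$ if and only if $\xi(\rho;\xi_0)\in\mathcal{I}_m^\Lambda$ for all $\xi_0\in\Phi$.

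For the forward implication, fix $\xi_0\in\Phi$ and consider the forward orbit $O_{\xi_0}:=\{\xi(t;\xi_0):t\geq\rho\}$. This set is an invariant subset of system \eqref{sys5}, since $L$ sends $\xi(t;\xi_0)$ to $\xi(t+1;\xi_0)\in O_{\xi_0}$; if moreover $O_{\xi_0}\subseteq\Lambda$, then $O_{\xi_0}$ is an invariant subset contained in $\Lambda$, so $O_{\xi_0}\subseteq\mathcal{I}_m^\Lambda$ by maximality, and in particular $\xi(\rho;\xi_0)\in\mathcal{I}_m^\Lambda$. The converse is immediate from the invariance of $\mathcal{I}_m^\Lambda$: if $\xi(\rho;\xi_0)\in\mathcal{I}_m^\Lambda$, then $\xi(t;\xi_0)\in\mathcal{I}_m^\Lambda\subseteq\Lambda$ for all $t\geq\rho$. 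This is the step I expect to be the main obstacle, since it is the only place where the maximum invariant subset is genuinely exploited — recognising the forward orbit as an invariant set is exactly what upgrades ``the trajectory stays in $\Lambda$ from time $\rho$ on'' to ``the trajectory has already entered $\mathcal{I}_m^\Lambda$ at time $\rho$''; everything after it is bookkeeping with supports of Boolean vectors.

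Next I would pass to matrices. Writing $\xi_0=\delta_{k^{2n}}^i$ gives $\xi(\rho;\xi_0)=L^{\rho}\delta_{k^{2n}}^i=Col_i(L^{\rho})$, a canonical basis vector, so $L^{\rho}\Xi_1=\sum_{\delta_{k^{2n}}^i\in\Phi}Col_i(L^{\rho})$ and the support of $sgn(L^{\rho}\Xi_1)$ is precisely the index set of $\{Col_i(L^{\rho}):\delta_{k^{2n}}^i\in\Phi\}$; likewise the support of $\Xi_3$ is the index set of $\mathcal{I}_m^\Lambda$. Hence $\{\xi(\rho;\xi_0):\xi_0\in\Phi\}\subseteq\mathcal{I}_m^\Lambda$ is equivalent to $sgn(L^{\rho}\Xi_1)\leq\Xi_3$, and combining with the equivalence of the previous paragraph, a positive integer $\rho$ meets the requirement in Definition~\ref{def1} for $\Phi$ exactly when $sgn(L^{\rho}\Xi_1)\leq\Xi_3$. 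Finally, the search may be restricted to $1\leq t\leq\tau$: since \eqref{sys3} and \eqref{sys4} are approximately synchronous with respect to $\Phi$, for each $\xi_0\in\Phi$ the attractor $\Omega_{\xi_0}$ is an invariant subset of $\Lambda$, hence $\Omega_{\xi_0}\subseteq\mathcal{I}_m^\Lambda$; as $\tau\geq\tau_{\xi_0}$ we get $\xi(\tau;\xi_0)\in\Omega_{\xi_0}\subseteq\mathcal{I}_m^\Lambda$, so $t=\tau$ already satisfies $sgn(L^{\tau}\Xi_1)\leq\Xi_3$. Therefore the set $\{t:1\leq t\leq\tau,\ sgn(L^{t}\Xi_1)\leq\Xi_3\}$ is nonempty and its minimum is $\Gamma_{\Phi}$, which is \eqref{eqpro4-1}. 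Part 2) is the special case $\Phi=\Delta_{k^{2n}}$, for which $\Xi_1=\mathbf{1}_{k^{2n}}$ and $L^{t}\Xi_1=L^{t}\mathbf{1}_{k^{2n}}$, giving \eqref{eqpro4-2}.
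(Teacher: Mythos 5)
Your proof is correct and follows essentially the same route as the paper: reduce the synchronization-time condition to membership of $\xi(t;\xi_0)$ in the maximum invariant subset $\mathcal{I}_m^\Lambda$, encode that as $sgn(L^t\Xi_1)\leq\Xi_3$, and use the transient period $\tau$ to confine the search to $1\leq t\leq\tau$. Your explicit observation that the forward orbit $\{\xi(t;\xi_0):t\geq\rho\}$ is itself an invariant subset of $\Lambda$ (hence contained in $\mathcal{I}_m^\Lambda$ by maximality) cleanly justifies the one direction the paper's proof leaves implicit, so if anything your write-up is slightly more complete.
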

\begin{proof}
See Appendix.
\end{proof}

\begin{remark}

In 1) of Theorem \ref{pro4}, the range of the shortest approximate synchronization time  with respect to $\Phi$ is limited to $1\leq\Gamma_{\Phi}\leq\tau$. In fact, the upper bound of $\Gamma_{\Phi}$ can be further reduced.
Define
\begin{equation}
\tau_\Phi:=\max_{\xi\in\Phi}\tau_\xi,
\end{equation}
where $\tau_\xi$ is the transient period of $\xi$. That is to say, $\tau_\Phi$ is the smallest integer such that $\xi(t;\xi_0)\in\Omega_\Phi$ holds for any $\xi_0\in\Phi$ and $t\geq\tau_\Phi$.
From the proof of Theorem \ref{pro4}, we can conclude that the shortest approximate synchronization time with respect to $\Phi$ can be also expressed as
\begin{equation}
\Gamma_{\Phi}=\mathop{\arg\min}_{1\leq t\leq \tau_\Phi}\{sgn(L^t\Xi_1)\leq\Xi_3\}.
\end{equation}
In conclusion, Theorem \ref{pro4} not only provides the method to calculate the shortest approximate synchronization time, but also indicates the relationship between transient period and the shortest approximate synchronization time.
\end{remark}

\subsection{Pinning Control Design for Approximate Synchronization}\label{sec4-3}

In this subsection, we investigate pinning control design for the approximate synchronization.
Precisely, if systems \eqref{sys1} and \eqref{sys2} are not approximately synchronous for some initial states, i.e., $\Phi_{\max}\neq\Delta_{k^{2n}}$, then we consider injecting pinning state feedback controllers to system \eqref{sys1} or \eqref{sys2} to guarantee that both of them achieve global approximate synchronization.

Different from [39], where drive-response BNs were considered and pinning controllers only inject to response system, this paper discusses coupled MVLNs, which can be unilaterally coupled or bidirectionally coupled. Therefore, how to select pinning nodes is a problem we need to deal with. The main idea in the following is to regard two coupled MVLNs as a whole and utilizing $\Phi_{\max}$ and $\mathcal{I}_m^\Lambda$ to find out pinning nodes.
Hence, for the convenience of  pinning control design, we denote
\begin{equation}
\xi_l(t)=
\begin{cases}
x_l(t), &1\leq l\leq n,\\
z_{l-n}(t), &n+1\leq l\leq 2n,
\end{cases}
\end{equation}
and
\begin{equation}
h_l(X(t),Z(t))=
\begin{cases}
f_l(X(t),Z(t)), &1\leq l\leq n,\\
g_{l-n}(X(t),Z(t)), &n+1\leq l\leq 2n,
\end{cases}
\end{equation}
where $\xi_l(t)\in\mathcal{D}_k$ and $h_l: \mathcal{D}_k^{2n}\rightarrow\mathcal{D}_k$, , $l=1,2,\ldots,2n$.
Denote the structure matrices of $h_l$ by $H_l$, $l=1,2,\ldots,2n$. Obviously, $H_l=F_l$, if $1\leq l\leq n$, otherwise, $H_l=G_{l-n}$.

Further, systems \eqref{sys1} and \eqref{sys2} with pinning controllers can be expressed in the form of
\begin{equation}\label{sys6}
  \begin{cases}
    \xi_i(t+1)=\overline{h}_i(u_i(t),X(t),Z(t)), ~i\in\mathcal{P},\\
    \xi_j(t+1)=h_j(X(t),Z(t)), ~j\in\{1,2,\ldots,2n\}\setminus\mathcal{P},
  \end{cases}
\end{equation}
where $\mathcal{P}$ is the set of pinning nodes,  $\overline{h}_i: \mathcal{D}_k^{2n+1}\rightarrow\mathcal{D}_k$, $i\in\mathcal{P}$ are logical functions, and $u_i(t):=\varphi_i(X(t),Z(t))\in\mathcal{D}_k$, $i\in\mathcal{P}$ are pinning state feedback controllers.
Moreover, $\overline{h}_i$ can be further expressed as
\begin{equation}
  \overline{h}_i(u_i(t),X(t),Z(t))=u_i(t)\odot_{\varphi_i}h_i(X(t),Z(t)),
\end{equation}
where $\odot_{\varphi_i}$, $i\in\mathcal{P}$ are some logical operators to be designed.

Assume $M_i$ and $K_i$ are the structure matrices of $\odot_{\varphi_i}$ and $\varphi_i$, $i\in\mathcal{P}$, respectively. Then system \eqref{sys6} can be converted into the following equivalent algebraic form:
\begin{equation}
\begin{cases}
  \xi_i(t+1)=\overline{H}_i\xi(t), ~&i\in\mathcal{P},\\
  \xi_j(t+1)=H_j\xi(t), ~&j\in\{1,2,\ldots,2n\}\setminus\mathcal{P},
\end{cases}
\end{equation}
where $\xi(t)=x(t)z(t)=\ltimes_{i=1}^{2n}\xi_i(t)\in\Delta_{k^{2n}}$, $\overline{H}_i=M_i(K_i\ast H_i)$.
Furthermore, we have
\begin{equation}\label{sys9}
\xi(t+1)=\overline{L}\xi(t),
\end{equation}
where $\overline{L}$ is the transition matrix with pinning controllers.

Obviously, $\mathcal{P}$ can be divided into a partition $\mathcal{P}=\mathcal{P}_1\cup\mathcal{P}_2$, $\mathcal{P}_1\cap\mathcal{P}_2=\emptyset$ such that for $i\in\mathcal{P}$,
\begin{equation}
\begin{cases}
i\in\mathcal{P}_1,~&1\leq i\leq n,\\
i-n\in\mathcal{P}_2,~&n+1\leq i \leq 2n.
\end{cases}
\end{equation}
Then it is clear that $\mathcal{P}_1$ and $\mathcal{P}_2$ correspond to the set of pinning nodes of systems \eqref{sys1} and \eqref{sys2}, respectively.
In addition, via injecting pinning state feedback controllers into the state nodes in $\mathcal{P}_1$ and $\mathcal{P}_2$, global approximate synchronization between systems \eqref{sys1} and \eqref{sys2} can be achieved.
If either $\mathcal{P}_1$ or $\mathcal{P}_2$ is an empty set, without loss of generality, suppose $\mathcal{P}_1\neq\emptyset$, but $\mathcal{P}_2=\emptyset$, then we can only add pinning controllers to system \eqref{sys1}.


In order to make systems \eqref{sys3} and \eqref{sys4} globally approximately synchronous, we try to perturb the transition matrix of system \eqref{sys5}.
Based on Proposition \ref{pro3}, the main process is divided into the following two situations.
\begin{enumerate}[1)]
\item Cases $1$: $\Phi_{\max}=\emptyset$.

It is noted that $\Phi_{\max}=\emptyset$ implies that $\mathcal{I}_m^\Lambda=\emptyset$.
In other words, $\Lambda$ contains no attractor of system \eqref{sys5}.
On the one hand, we need to change $Col_i(L)$ which is not in $\Lambda$ such that $Col_i(L)\in\Lambda$, $\delta_{k^{2n}}^i\in\Lambda$ to guarantee that  $\Lambda$ becomes an invariant subset of $\Delta_{k^{2n}}$.
On the other hand, we need to perturb all attractors of system \eqref{sys5} to make sure that the trajectory of system \eqref{sys5} starting from any initial state can go into $\Lambda$ after finite steps.
\item Case $2$: $\Phi_{\max}\neq\emptyset$.

If $\Phi_{\max}\neq\emptyset$, then $\mathcal{I}_m^\Lambda\neq\emptyset$ and $\Omega_{\Phi_{\max}}\subseteq \mathcal{I}_m^\Lambda$. That is to say, $\Lambda$ contains at least one attractor of system \eqref{sys5}.
Therefore, we only need to perturb the attractors that are not in $\Lambda$ to make sure that all states in $\Delta_{k^{2n}}$ can go into $\mathcal{I}_m^\Lambda$ after finite steps.
\end{enumerate}

For each attractor $\mathcal{C}_{i}\in\Omega$, $i\in\{1,2,\ldots,\epsilon\}$, we define the  index vector of $\mathcal{C}_{i}$ as:
\begin{equation}\label{eq7-2}
\Xi^{i}=\sum_{\delta_{k^{2n}}^i\in\mathcal{C}_{i}} \delta_{k^{2n}}^i\in\mathcal{B}_{k^{2n}\times1}.
\end{equation}
Denote the set of attractors which need to be perturbed by $\Omega_0$.
An algorithm (Algorithm 3) to calculate $\Omega_0$ is derived. (Please see Appendix for the proof of the validity of Algorithm 3.)

\begin{algorithm}[htb]
\label{alg1}
\caption{Calculate the set $\Omega_0$ containing all attractors needed to be perturbed.}
\begin{algorithmic}[1]
\Require $\Omega=\{\mathcal{C}_{1}, \mathcal{C}_{2},\ldots,\mathcal{C}_{\epsilon}\}$, $\Lambda$
\Ensure $\Omega_0$
\State Initialize $\Omega_0=\{\mathcal{C}_{1}, \mathcal{C}_{2},\ldots,\mathcal{C}_{\epsilon}\}$;
\State Compute $\Xi_2$ according to \eqref{eq7-1};
\For{$i=0 \to \epsilon$}
\State Compute $\Xi^{i}$ according to \eqref{eq7-2};
\If{$\Xi^{i}\circ\Xi_2=\Xi^{i}$}
\State $\Omega_0=\Omega_0\backslash\mathcal{C}_{i}$;
\EndIf
\EndFor
\end{algorithmic}
\end{algorithm}

Based on $\Phi_{\max}$, $\mathcal{I}_m^\Lambda$ and $\Omega_0$ obtained in Algorithms 1-3, the following algorithm (Algorithm 4) is given to determine the perturbed transition matrix and the set of pinning nodes.
\begin{algorithm}\label{alg2}
\caption{Calculate the perturbed transition matrix $\overline{L}$ and set $\mathcal{P}$ containing pinning nodes.}
\begin{algorithmic}[1]
\Require  $\Lambda$, $\mathcal{I}_m^\Lambda$, $\Phi_{\max}$, $\Omega_0:=\{\mathcal{C}^1,\ldots,\mathcal{C}^{|\Omega_0|}\}$, $L$ and $H_i$, $i=1,2,\ldots,2n$
\Ensure $\overline{L}$, $\mathcal{P}$ and $\overline{H}_i$, $i\in\mathcal{P}$
\State Initialize $\mathcal{P}=\emptyset$ and $\overline{L}=L$;
\If{$\Phi_{\max}=\emptyset$}
\While{$\delta_{k^{2n}}^i\in\Lambda$}
\If{$Col_i({L})\in\Lambda$}
\State Let $Col_i({\overline{L}})=Col_i({L})$;
\Else
\State Find a state $\xi_1\in\Lambda$ and let $Col_i({\overline{L}})=\xi_1$;
\EndIf
\EndWhile
\For{$i=1 \to |\Omega_0|$}
\State Select randomly a state $\delta_{k^{2n}}^l\in\mathcal{C}^i$;
\State Find a state $\xi_2\in \Lambda$ and let $Col_l({\overline{L}})=\xi_2$;
\EndFor
\Else
\For{$i=1 \to |\Omega_0|$}
\State Select randomly a state $\delta_{k^{2n}}^l\in\mathcal{C}^i$;
\State Find a state $\xi_3\in \mathcal{I}_m^\Lambda$ and let $Col_l({\overline{L}})=\xi_3$;
\EndFor
\EndIf
\For{$i=1 \to 2n$}
\State Compute $\overline{H}_i:=(\mathbf{1}^\top_{k^{i-1}}\otimes I_k\otimes\mathbf{1}^\top_{k^{2n-i}})\overline{L}$;
\If{$\overline{H}_i\neq H_i$}
\State $\mathcal{P}=\mathcal{P}\cup\{i\}$;
\EndIf
\EndFor
\end{algorithmic}
\end{algorithm}

\begin{remark}
 (A note on the selection of pinning nodes)

The first step is to perturb the transition matrix L of system (14). In Algorithm 4, lines 2-13 deal with the case of $\Phi_{\max}=\emptyset$. Lines 3-9
make sure that the approximate synchronous state set $\Lambda$ becomes an invariant subset of $\Delta_{k^{2n}}$. And lines 10-13 guarantee that all states will enter into $\Lambda$ after finite steps.
To deal with the case of $\Phi_{\max}\neq\emptyset$, lines 15-18 perturb the attractors which are not in $\Lambda$.
The second step is to find out the pinning nodes from $\overline{L}$. lines 20-25 aim to determine pinning nodes by comparing the structure matrices $\overline{H}_i$ with $H_i$.
\end{remark}

The following  theorem is given to demonstrate that systems \eqref{sys3} and \eqref{sys4} are  globally approximately synchronous through the process in Algorithm 4.
\begin{theorem}\label{th3}
Consider systems \eqref{sys3} and \eqref{sys4} with the augmented system \eqref{sys5}.  If the transition matrix $L$ of system (14) can be changed into $\overline{L}$ by Algorithm 4, then systems \eqref{sys3} and \eqref{sys4} are globally approximately synchronous.
\end{theorem}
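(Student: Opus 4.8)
The plan is to verify that the matrix $\overline{L}$ produced by Algorithm 4 satisfies the global-approximate-synchronization criterion of Corollary \ref{th2}, namely $sgn(\overline{L}^{\bar\tau}\mathbf{1}_{k^{2n}})\leq\Xi_2$, where $\bar\tau$ is the transient period of the perturbed system \eqref{sys9}; equivalently, by the discussion after Theorem \ref{th1}, it suffices to show $Col(\overline{L}^{\bar\tau})\subseteq\Lambda$, i.e.\ every trajectory of \eqref{sys9} eventually enters and stays in $\Lambda$. I would split the argument into the two cases that drive the construction in Algorithm 4.

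\textbf{Case $\Phi_{\max}=\emptyset$.} By Proposition \ref{pro3}, $\mathcal{I}_m^\Lambda=\emptyset$, so $\Lambda$ originally contains no attractor. First I would argue that lines 3--9 make $\Lambda$ an invariant subset of $\overline{L}$: for every $\delta_{k^{2n}}^i\in\Lambda$ the new column $Col_i(\overline{L})$ is chosen inside $\Lambda$ (either it was already there, or it is replaced by some $\xi_1\in\Lambda$), hence $\overline{L}_\Lambda\subseteq\Lambda$, which is exactly the invariance characterization stated before Algorithm 2. Next, since $\Omega_0$ here is the full set of attractors of the original $L$ (none lies in $\Lambda$, by Algorithm 3 together with $\mathcal{I}_m^\Lambda=\emptyset$), lines 10--13 redirect, for each attractor $\mathcal{C}^i$, one of its states into $\Lambda$. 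I would then show that the perturbed system \eqref{sys9} has \emph{all} its attractors inside $\Lambda$: any attractor $\mathcal{C}$ of $\overline{L}$ that meets $\Delta_{k^{2n}}\setminus\Lambda$ would, being a cycle, have to revisit a state whose image was rerouted into the invariant set $\Lambda$, and once inside $\Lambda$ it can never leave — contradicting $\mathcal{C}\cap(\Delta_{k^{2n}}\setminus\Lambda)\neq\emptyset$ unless the cycle is a single self-loop in $\Lambda$. Hence $\Omega(\overline{L})\subseteq\Lambda$, and since every trajectory of a finite-state logical system reaches its attractor in at most $\bar\tau$ steps, $Col(\overline{L}^{\bar\tau})\subseteq\Lambda$.

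\textbf{Case $\Phi_{\max}\neq\emptyset$.} By Proposition \ref{pro3}, $\mathcal{I}_m^\Lambda\neq\emptyset$ and $\Omega_{\Phi_{\max}}\subseteq\mathcal{I}_m^\Lambda$, so $\Lambda$ already contains attractors, and by Algorithm 3, $\Omega_0$ is exactly the set of attractors not entirely contained in $\Lambda$. Lines 15--18 reroute, for each such $\mathcal{C}^i\in\Omega_0$, one state of $\mathcal{C}^i$ into $\mathcal{I}_m^\Lambda$; since $\mathcal{I}_m^\Lambda$ is invariant under the original $L$ and its columns are untouched by Algorithm 4, it remains invariant under $\overline{L}$. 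I would then check that $\overline{L}$ destroys every attractor lying outside $\Lambda$: such an attractor of $L$ is some $\mathcal{C}^i\in\Omega_0$, one of whose states now maps into $\mathcal{I}_m^\Lambda\subseteq\Lambda$, from which the orbit can never escape, so $\mathcal{C}^i$ is no longer a cycle of $\overline{L}$; and no new attractor outside $\Lambda$ can be created, because the only modified columns point into $\mathcal{I}_m^\Lambda$. Therefore every attractor of $\overline{L}$ lies in $\Lambda$, and again $Col(\overline{L}^{\bar\tau})\subseteq\Lambda$, giving global approximate synchronization by Corollary \ref{th0}.

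\textbf{Main obstacle.} The routine parts (finite-state trajectories reach attractors; invariance is preserved by leaving columns untouched) are easy; the delicate point is the claim that redirecting a \emph{single} state per off-$\Lambda$ attractor into an invariant subset of $\Lambda$ is enough to guarantee that \emph{no} attractor of the perturbed system remains outside $\Lambda$. This needs the observation that any cycle of $\overline{L}$ either avoids all modified columns — in which case it is a cycle of the original $L$ meeting no rerouted state, so (in Case 2) it was already inside $\Lambda$, and (in Case 1) this is impossible since every original attractor had a state rerouted — or it hits a modified column, after which it is trapped in the invariant set $\Lambda$ and hence cannot be a cycle straddling the complement of $\Lambda$. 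Making this dichotomy airtight, including the verification that $\mathcal{I}_m^\Lambda$ (resp.\ $\Lambda$ after lines 3--9) is genuinely $\overline{L}$-invariant and not merely $L$-invariant, is where I would spend the bulk of the proof.
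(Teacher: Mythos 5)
Your proposal is correct and follows essentially the same route as the paper: the same two-case split on whether $\Phi_{\max}$ is empty, making $\Lambda$ (resp.\ using $\mathcal{I}_m^\Lambda$) an invariant set under $\overline{L}$, and concluding that every trajectory is eventually trapped in $\Lambda$, hence global approximate synchronization via Corollary \ref{th0}. In fact you supply more detail than the paper does on the one delicate step (that rerouting a single state per off-$\Lambda$ attractor leaves no attractor of $\overline{L}$ outside $\Lambda$, and that $\mathcal{I}_m^\Lambda$ remains invariant under $\overline{L}$), which the paper simply asserts.
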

\begin{proof}
See Appendix.
\end{proof}

Next, we concern with how to design feasible pinning state feedback controllers according to the perturbed transition matrix $\overline{L}$.

Note that by Algorithm 4, $\overline{H}_i$, $i\in\mathcal{P}$ can be calculated from $\overline{L}$ . The key process we need to deal with is to solve $M_i$ and $K_i$ from the following logical matrix equation:
\begin{equation}\label{eq5}
 \overline{H}_i=M_i(K_i\ast H_i),~i\in\mathcal{P}.
\end{equation}
Actually, equation \eqref{eq5} is solvable.
In light of the property of  Khatri-Rao product, \eqref{eq5} is equivalent to
\begin{equation}\label{eq8}
Col_l(\overline{H}_i)=M_iCol_l(K_i)Col_l(H_i),~l=1,2,\ldots,k^{2n}.
\end{equation}
Assume
\begin{align}
H_i=&\delta_{k}[h^i_1, h^i_2,\ldots,h^i_{k^{2n}}],\notag\\
\overline{H}_i=&\delta_{k}[\overline{h}^i_1, \overline{h}^i_2,\ldots,\overline{h}^i_{k^{2n}}],\notag\\
K_i=&\delta_{k}[w^i_1, w^i_2,\ldots,w^i_{k^{2n}}],\notag\\
M_i=&\delta_{k}[m^i_1, m^i_2,\ldots,m^i_{k^{2}}].
\end{align}
Then from  \eqref{eq8}, it is easy to calculate
\begin{align}
  \delta_k^{\overline{h}^i_l}&=M_i\delta_k^{w^i_l}\delta_k^{h^i_l}=M_i\delta_{k^2}^{(w^i_l-1)k+h^i_l}=\delta_k^{m^i_{(w^i_l-1)k+h^i_l}}.
\end{align}
Hence,
\begin{equation}\label{eq6}
  m^i_{(w^i_l-1)k+h^i_l}=\overline{h}^i_l, l=1,2,\ldots,k^{2n}.
\end{equation}
According to equation \eqref{eq6}, as long as $K_i$ is known, then $M_i$ can be determined subsequently.
Nevertheless, it is worth noting that the value of $w^i_l$ is not arbitrary, because it needs to make sure that the value of $m^i_{(w^i_l-1)k+h^i_l}$ is unique.
For instance, suppose $h^i_1=1, \overline{h}^i_1=1$ and $h^i_2=1, \overline{h}^i_2=2$, it follows from \eqref{eq6} that $w^i_1\neq w^i_2$. Otherwise, if $w^i_1=w^i_2=1$, then we can see $1=\overline{h}^i_1=m^i_1=\overline{h}^i_2=2$, which is a contradiction.

In order to ensure that the value of $m^i_{(w^i_l-1)k+h^i_l}$, $l\in\{1,2,\ldots, k^2\}$ makes sense, it is necessary to give a criterion for the value of $w^i_l$, $l\in\{1,2,\ldots,k^{2n}\}$. Then construct
\begin{equation}\label{eq9}
T_{\alpha,\beta}^i=\{l: h^i_l=\alpha, \overline{h}^i_l=\beta\},~\alpha,\beta=1, 2,\ldots,k.
\end{equation}
It is clear  that
\begin{equation}
\bigcup_{\alpha=1}^k\bigcup_{\beta=1}^kT_{\alpha,\beta}^i=\{1,2,\ldots,k^{2n}\},
\end{equation}
where $T_{\alpha,\beta}^i$, $\alpha,\beta=1, 2,\ldots,k$ are pairwise disjoint.
Denote
\begin{equation}\label{eq10}
\mathfrak{T}_\sigma^i=\{T_{\sigma,\beta}^i: T_{\sigma,\beta}^i\neq\emptyset,\beta\in\{1, 2,\ldots,k\}\},~\sigma=1, 2,\ldots,k.
\end{equation}
Obviously, $\mathfrak{T}_\sigma^i$ consists of all nonempty set  $T_{\sigma,\beta}^i$, $\beta\in$ $\{1,$ $2,$ $\ldots,k\}$.

On the basis of the analysis above, we present the following  proposition to explain  the solvability of equation \eqref{eq6}.
\begin{proposition}\label{pro2}
Consider equation \eqref{eq6}. The following conclusions provide the criteria for the value of $w^i_l$.
\begin{enumerate}[1)]
\item If $|\mathfrak{T}_\sigma^i|\leq1$, $\sigma=1,2,\ldots,k$, then $w^i_l$, $l=1,2,\ldots,k^{2n}$ can take any value from $\{1, 2,\dots,k\}$.
\item If there exists $\sigma^\prime\in\{1,2,\ldots,k\}$ such that $2\leq|\mathfrak{T}_{\sigma\prime}^i|\leq k$, then equation $\eqref{eq6}$ makes sense  if and only if  for any $l_{\mu}\in T_{\sigma^\prime,\beta_\mu}^i$ and $l_{\nu}\in T_{\sigma^\prime,\beta_\nu}^i$,
   \begin{equation}
    w^i_{l_\mu}\neq w^i_{l_\nu},
   \end{equation}
where $T_{\sigma^\prime,\beta_\mu}^i, T_{\sigma^\prime,\beta_\nu}^i\in\mathfrak{T}_{\sigma^\prime}^i$ and $\mu\neq\nu$.
\end{enumerate}
\end{proposition}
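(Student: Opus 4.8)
\textbf{Proof proposal for Proposition \ref{pro2}.}

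The plan is to analyze equation \eqref{eq6}, namely $m^i_{(w^i_l-1)k+h^i_l}=\overline{h}^i_l$, as a well-definedness condition on the map $l\mapsto (w^i_l-1)k+h^i_l$ that assigns to each index $l\in\{1,\ldots,k^{2n}\}$ an entry position of $M_i$. The core observation is that for indices $l$ with a \emph{common} value $h^i_l=\sigma$, the entry position depends only on $w^i_l$; so two such indices $l_\mu,l_\nu$ collide at the same entry $m^i_{(w^i_\ast-1)k+\sigma}$ precisely when $w^i_{l_\mu}=w^i_{l_\nu}$, and such a collision forces $\overline{h}^i_{l_\mu}=\overline{h}^i_{l_\nu}$. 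Conversely, if $h^i_{l_\mu}=\sigma$ but $h^i_{l_\nu}=\sigma'\neq\sigma$, the two entry positions differ in their residue mod $k$ and can never collide, so these pairs impose no constraint. Hence the whole consistency question decouples over $\sigma=1,\ldots,k$, and within a fixed $\sigma$ it is governed exactly by how the set $\{l: h^i_l=\sigma\}$ is partitioned by the values $\overline{h}^i_l$, i.e.\ by the family $\mathfrak{T}_\sigma^i$ defined in \eqref{eq10}.

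For part 1), I would argue: if $|\mathfrak{T}_\sigma^i|\leq 1$ for every $\sigma$, then for each fixed $\sigma$ all indices $l$ with $h^i_l=\sigma$ share a single value $\overline{h}^i_l=\beta_\sigma$ (the unique nonempty $T^i_{\sigma,\beta}$). Therefore, whatever values $w^i_l\in\{1,\ldots,k\}$ are chosen, any collision $(w^i_{l_\mu}-1)k+\sigma=(w^i_{l_\nu}-1)k+\sigma$ only occurs among indices that already share the same $\overline{h}^i$-value, so \eqref{eq6} assigns a consistent value $m^i_{(w^i_l-1)k+\sigma}=\beta_\sigma$. No contradiction can arise, so $w^i_l$ is free; the remaining unconstrained entries of $M_i$ can be filled arbitrarily (say $m^i_r=1$) to complete a valid structure matrix.

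For part 2), suppose some $\sigma'$ has $2\leq|\mathfrak{T}_{\sigma'}^i|\leq k$, meaning $\{l:h^i_l=\sigma'\}$ splits into at least two distinct $\overline{h}^i$-classes $T^i_{\sigma',\beta_1},\ldots,T^i_{\sigma',\beta_r}$ with distinct $\beta$'s. For necessity: if $l_\mu\in T^i_{\sigma',\beta_\mu}$ and $l_\nu\in T^i_{\sigma',\beta_\nu}$ with $\mu\neq\nu$ had $w^i_{l_\mu}=w^i_{l_\nu}$, then \eqref{eq6} would demand $m^i$ at the same position equal both $\beta_\mu$ and $\beta_\nu$, which are unequal — a contradiction; so \eqref{eq6} making sense forces $w^i_{l_\mu}\neq w^i_{l_\nu}$. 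For sufficiency: assume the stated inequalities hold for all such $\sigma'$ (and recall part 1 already handles the $|\mathfrak{T}_\sigma^i|\leq1$ values of $\sigma$); then for each $\sigma$, indices in different $\overline{h}^i$-classes get distinct $w$-values hence distinct entry positions, while indices in the same class may collide but agree on the required value — so \eqref{eq6} defines $M_i$ without conflict, and again the leftover entries are set arbitrarily. One should note in passing that such an assignment of distinct $w$-values is feasible because $|\mathfrak{T}_{\sigma'}^i|\leq k$ guarantees there are at most $k$ classes to separate and $k$ available values $\{1,\ldots,k\}$; this is exactly why the bound $|\mathfrak{T}_{\sigma'}^i|\leq k$ appears.

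The routine bookkeeping (checking that the constructed $M_i\in\mathcal{L}_{k\times k^2}$, that $K_i\in\mathcal{L}_{k\times k^{2n}}$, and that distinct entry positions across different residues never interact) is straightforward. The one point deserving care — and the main obstacle — is the \emph{sufficiency} direction: one must be sure that after imposing the pairwise-distinctness of $w$-values \emph{within each} $\sigma'$-block, no hidden cross-block conflict remains. This is resolved by the residue-mod-$k$ argument above: positions arising from $h^i_l=\sigma_1$ and $h^i_l=\sigma_2$ with $\sigma_1\neq\sigma_2$ occupy disjoint arithmetic progressions in $\{1,\ldots,k^2\}$, so the constraints genuinely decouple across $\sigma$ and the per-block construction glues into a global one.
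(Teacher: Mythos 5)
Your proof is correct and follows essentially the same route as the paper's: both reduce equation \eqref{eq6} to a well-definedness question for the assignment $m^i_{(w^i_l-1)k+h^i_l}=\overline{h}^i_l$, observe that conflicts can only arise among indices sharing the same $h^i_l=\sigma$ but differing in $\overline{h}^i_l$, and derive part 2) by the same contradiction $\beta_\mu=m^i_{(w^i_{l_\mu}-1)k+\sigma'}=\beta_\nu$. If anything, your write-up is more complete than the paper's, which only argues the necessity direction of 2) explicitly and leaves the sufficiency (your residue-mod-$k$ decoupling across $\sigma$-blocks and the feasibility from $|\mathfrak{T}_{\sigma'}^i|\leq k$) implicit in its opening observation.
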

\begin{proof}
See Appendix.
\end{proof}

Combined with Proposition \ref{pro2}, an algorithm (Algorithm 5) is developed to design the structure matrices $K_i$ and $M_i$, $i\in\mathcal{P}$.
Note that  $w^i_l$ and $m^i_{(w^i_l-1)k+h^i_l}$ may have multiple values satisfying  condition \eqref{eq6}. We only provide one feasible method in Algorithm 5.
\begin{algorithm}[htbp]\label{alg2}
\caption{Calculate structure matrices $K_i$ and $M_i$, $i\in\mathcal{P}$.}
\begin{algorithmic}[1]
\Require $H_i$ and $\overline{H}_i$, $i\in\mathcal{P}$
\Ensure $K_i$ and $M_i$, $i\in\mathcal{P}$
\State Denote $H_i$ $=$ $\delta_{k}[h^i_1,h^i_2,\ldots,h^i_{k^{2n}}]$, $\overline{H}_i$ $=$ $\delta_{k}[\overline{h}^i_1,$ $\overline{h}^i_2,$ $\ldots,$ $\overline{h}^i_{k^{2n}}]$, $K_i$ $=$ $\delta_{k}[w^i_1,$ $w^i_2,$ $\ldots,$ $w^i_{k^{2n}}]$, $M_i$ $=$ $\delta_{k}[m^i_1,$ $m^i_2,$ $\ldots,$ $m^i_{k^{2}}]$;
\While{$i\in\mathcal{P}$}
\State Compute $T_{\alpha,\beta}^i$, $\alpha,\beta=1,2,\ldots,k$ according to \eqref{eq9};
\State Compute $\mathfrak{T}_\sigma^i$, $\sigma=1,2,\ldots,k$ according to \eqref{eq10};
\State Compute $|\mathfrak{T}_\sigma^i|$, $\sigma=1,2,\ldots,k$;
\State Denote $\mathfrak{T}_\sigma^i$ $=$ $\{T_{\sigma,\beta_1}^i,T_{\sigma,\beta_2}^i,\ldots,
T_{\sigma,\beta_{|\mathfrak{T}_\sigma^i|}}^i\}$, $\sigma=1,2,$ $\ldots,$ $k$;
\For{$\sigma=1\to k$}
\If{$1\leq |\mathfrak{T}_\sigma^i| \leq k$}
\For{$\kappa=1\to|\mathfrak{T}_\sigma^i|$}
\State For all $l\in T_{\sigma,\beta_\kappa}^i$, let $w^i_l=\kappa$;
\EndFor
\EndIf
\EndFor
\For{$l=1 \to k^{2n}$}
\State Compute $\pi:=(w^i_l-1)k+h^i_l$;
\State Let $m^i_\pi=\overline{h}^i_l$;
\EndFor
\EndWhile
\end{algorithmic}
\end{algorithm}

\begin{remark}
Analysis about computational complexity of Algorithms 3-5.
\begin{enumerate}[1)]
  \item Note that in Algorithm 3, the computational complexity of line 2 is $O(k^{2n})$ and the complexity of lines 3-10 is $O(k^{2n})$. Hence, the computational complexity of Algorithm 2 is $O(k^{2n})$.
  \item In Algorithm 4,  the computational complexity of lines 3-9 is $O(|\Lambda|)$, $|\Lambda|\leq k^{2n}$. The complexity of both ``for" loops in lines 10-13 and 15-18 is $O(|\Omega_0|)$, $|\Omega_0|\leq k^{2n}$. Besides, the complexity of lines 20-25 is $O(2nk^{4n+1})$. Therefore, we conclude that the computational complexity of Algorithm 4 is $O(2nk^{4n+1}+|\Lambda|+|\Omega_0|)=O(2nk^{4n+1})$.
  \item From Algorithm 5, the computational complexity of lines 3-6 is $O(k^{2n})$. And the computational complexity of  two ``for" loops in lines 7-18 are $O(k^{2n})$. Hence, the computational complexity of Algorithm 5 is $O(k^{2n})$.
\end{enumerate}
\end{remark}

\begin{remark}\label{rem2}
Review the shortest approximate synchronization time discussed in Theorem \ref{pro4}.
For systems \eqref{sys1} and \eqref{sys2} with pinning controllers,  we can also consider the problem of  global shortest approximate synchronization time.
By perturbing  transition matrix $L$ in Algorithm 4,
system \eqref{sys5} changes into
\begin{equation}\label{sys10}
\xi(t+1)=\overline{L}\xi(t).
\end{equation}
It is worth pointing out that the pinning controllers are not unique because $\overline{L}$, depending on lines 2-19 in Algorithm 4, is not unique.
However, it is clear from Theorem \ref{pro4} that once $\overline{L}$ is determined,  the global shortest approximate synchronization time of system \eqref{sys5} is
\begin{align}
 \Gamma=\mathop{\arg\min}_{1\leq t\leq \overline{\tau}}\{sgn(\overline{L}^t\mathbf{1}_{k^{2n}})\leq\Xi_3\},
\end{align}
where $\overline{\tau}$ is the transient period of system \eqref{sys10}.
Hence, it is an interesting issue to further investigate how to design pinning controllers to make the global shortest approximate synchronization time as small as possible in the future research work.
\end{remark}

\section{Special Case of Approximate Synchronization}\label{sec6}

In Sections \ref{sec3} and \ref{sec4}, we have investigated the approximate synchronization  of two coupled $k$-valued ($k>2$) logical networks.
As a matter of fact, the relevant results in this paper can be strengthened and applied to the complete synchronization problem of $k$-valued ($k\geq2$) logical networks.

Reconsider Definition \ref{def1}. As described in Remark \ref{rem1}, if take $\gamma=0$  in condition \eqref{eqrem1}, i.e.,
\begin{equation}
\max_{1\leq i\leq n}|x_i(t;X_0,Z_0)-z_i(t;X_0,Z_0)|=0,
\end{equation}
which means that the error between corresponding nodes is zero,
then approximate synchronization becomes complete synchronization.
In this case, approximate synchronous state set $\Lambda$ becomes synchronous state set $\Lambda^\prime$, where
\begin{equation}\label{eq11}
\Lambda^\prime=\{\delta_{k^{2n}}^{(i-1)k^n+j}: i=j, i=1,2,\ldots,k^n\},
\end{equation}
and $|\Lambda|=k^n$.
Meanwhile, the maximum approximate synchronization basin $\Phi_{\max}$ turns into the maximum synchronization basin $\Phi^\prime_{\max}$ and the shortest approximate synchronization time $\Gamma_{\Phi}$ becomes the shortest synchronization time $\Gamma^\prime_{\Phi^\prime_{\max}}$.

It is worth noting that if $\Lambda^\prime$ is substituted for $\Lambda$, then all theoretical results about approximate synchronization obtained in this paper are applicable to solve the complete synchronization problem of $k$-valued ($k\geq2$) logical networks. In Section \ref{sec5}, two examples (Examples \ref{ex1} and \ref{ex4}) are given to further illustrate the statement above.

\begin{remark}
Some advantages of this paper are summarized as follows:

1)  Global and local approximate synchronization proposed in this paper are two novel concept in the fields of logical networks. Compared with \cite{lirui2012CompleteSynchronization, mengmin2014multi-valued, chenhongwei2020LocalSynchronization}, our results are more general and can degenerate into the complete synchronous situation.
In addition, in this paper, the shortest approximate synchronization time which has a concrete physical meaning, are introduced and its calculation method is presented. However, the shortest synchronization time was not specified in \cite{lirui2012CompleteSynchronization, mengmin2014multi-valued, chenhongwei2020LocalSynchronization}.

2) The method of pinning control strategy used in this paper improves and extends
the approach in reference [39]. First, in [39], pinning controllers were imposed on response system by regarding the states of drive system as switching signals, which is not applicable to general coupled MVLNs. In this paper, we regard two coupled MVLNs as a whole and take advantage of the maximum approximate synchronization basin and the maximum invariant subset to determine pinning nodes. Second, in the process of calculating the structure matrices of pinning state feedback controllers, we give explicit criteria for the solvability of the structure matrices. However, in reference [39], the discriminant criteria were not given, and the structure matrices were obtained through solving logic matrix equations based on enumeration method, which is rather tedious.
\end{remark}

\section{Examples}\label{sec5}

In this section, four examples are given to illustrate the effectiveness of the results obtained in Section \ref{sec4}. Examples \ref{ex2} and \ref{ex3} discuss the approximate synchronization of two bidirectionally  and unidirectionally coupled logical networks, respectively. Examples  \ref{ex1} and \ref{ex4} analyze the complete synchronization of nonlinear feedback shift registers (NFSRs) and an epigenetic model, respectively.

\begin{example}\label{ex2}
Consider the following two bidirectionally coupled 5-valued logical networks:
\begin{equation}\label{sys7}
\begin{cases}
x_1(t+1)=x_1(t)\vee\neg x_3(t)\vee z_1(t),\\
x_2(t+1)=x_1(t)\vee\neg x_2(t),\\
x_3(t+1)=z_2(t),
\end{cases}
\end{equation}
and
\begin{equation}\label{sys8}
\begin{cases}
z_1(t+1)=x_1(t)\vee z_1(t)\vee \neg z_3(t),\\
z_2(t+1)=z_1(t)\vee\neg z_2(t),\\
z_3(t+1)=\oslash_5(x_2(t)),
\end{cases}
\end{equation}
where $x_i(t)\in\mathcal{D}_5$, $z_i(t)\in\mathcal{D}_5$, $i=1,2,3$, are state variables of systems \eqref{sys7} and \eqref{sys8}, respectively.
Denoting $x(t)=\ltimes_{i=1}^3x_i(t)\in\Delta_{5^3}$, $z(t)=\ltimes_{i=1}^3z_i(t)\in\Delta_{5^3}$ and $\xi(t)=x(t)z(t)\in\Delta_{5^6}$, we can convert systems \eqref{sys7} and \eqref{sys8} into the algebraic form \eqref{sys3} and \eqref{sys4} with
$F\in\mathcal{L}_{5^3\times5^6}$, $G\in\mathcal{L}_{5^3\times5^6}$.
Then the augmented system of  \eqref{sys7} and \eqref{sys8} can be expressed as
\begin{equation}\label{ex-eq1}
\xi(t+1)=L\xi(t),
\end{equation}
where $L=\delta_{5^6}$ $[2~2~2~2~2~127~\cdots~576~551~526~501]\in\mathcal{L}_{5^6\times5^6}$.

Let
\begin{equation}\label{eq-3}
\begin{cases}
i=\Sigma_{l=1}^2(i_{3-l}-1)5^l+i_3,\\
j=\Sigma_{l=1}^2(j_{3-l}-1)5^l+j_3,
\end{cases}
\end{equation}
where $|i_\varepsilon-j_\varepsilon|\leq1$, $i_\varepsilon, j_\varepsilon\in\{1,2,3,4,5\}$, $\varepsilon=1,2,3$.
According to \eqref{eq-1}$-$\eqref{eq2}, the approximate synchronous state set is
\begin{align}\label{eqex1Lamda}
\Lambda=&\{\delta_{5^6}^{(i-1)5^3+j}: i, j ~\text{satisfy}~\eqref{eq-3}\},
\end{align}
where $|\Lambda|=2197$, which is consistent with Proposition \ref{pro1}.

A simple calculation shows that the set of all attractors of system \eqref{ex-eq1} is $\Omega=\{\mathcal{C}_1, \mathcal{C}_2\}$, where $\mathcal{C}_1=\{\delta_{5^6}^{3}\}$, $\mathcal{C}_2=\{\delta_{5^6}^{3908}\}$ and the transient period of system \eqref{ex-eq1} is $\tau=9$.
In light of Corollary \ref{th2}, it is easy to get that
\begin{equation}
sgn(L^9\mathbf{1}_{5^6})\leq\Xi_2,
\end{equation}
where $\Xi_2=\sum_{\delta_{5^6}^i\in\Lambda}\delta_{5^6}^i$.
Therefore, systems \eqref{sys7} and \eqref{sys8} are globally approximately synchronous.
Besides, using Algorithm 1, we can get that $\Phi_{\max}=\Delta_{k^{2n}}$, which also implies that systems \eqref{sys7} and \eqref{sys8} can achieve global approximate synchronization.

According to Theorem \ref{pro4}, the global shortest approximate synchronization time is $\Gamma=9$.
Take initial states $X_0=(\frac{1}{4}, 1, 1)$ and $Z_0=(\frac{1}{4}, \frac{1}{4}, \frac{1}{4})$. It is clear that $(X_0, Z_0)\sim\delta_{5^6}^{9469}$ and
the shortest approximate synchronization time with respect to $\Phi_0:=\{\delta_{5^6}^{9469}\}$ is $\Gamma_{\Phi_0}=6$. The trajectories of systems \eqref{sys7} and \eqref{sys8} starting from $(X_0, Z_0)$ are shown in Fig \ref{fig2}.
\begin{figure}[htb]
\centering
\includegraphics[width=3.3in]{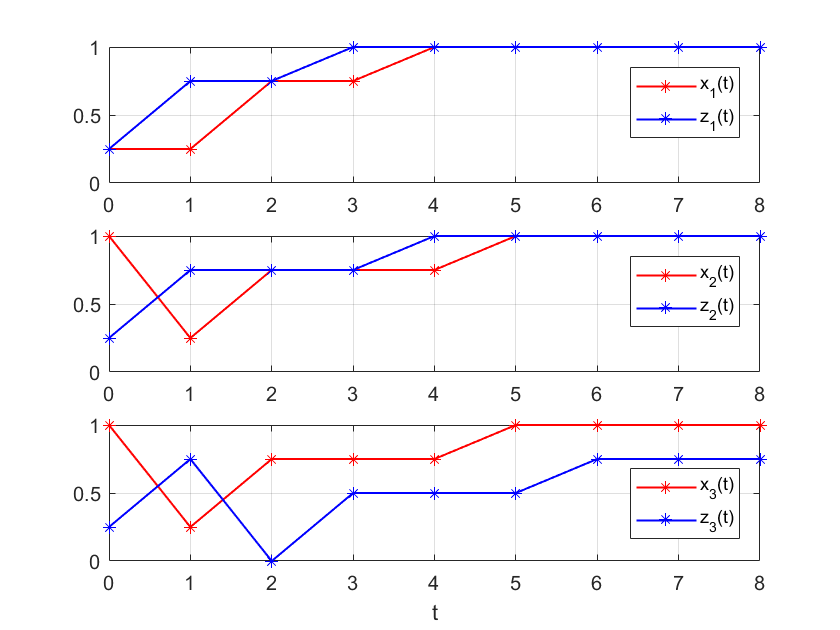}
\caption{ The evolution of systems \eqref{sys7} and \eqref{sys8} starting from initial states $X_0=(\frac{1}{4}, 1, 1)$ and $Z_0=(\frac{1}{4}, \frac{1}{4}, \frac{1}{4})$.}
\label{fig2}
\end{figure}

\end{example}

\begin{example}\label{ex3}
Consider the following two unidirectionally coupled 5-valued logical networks:
\begin{equation}\label{ex2-eq1}
\begin{cases}
x_1(t+1)=x_3(t),\\
x_2(t+1)=\neg x_1(t)\wedge x_2(t) \wedge x_3(t),\\
x_3(t+1)=x_2(t),
\end{cases}
\end{equation}
and
\begin{equation}\label{ex2-eq2}
\begin{cases}
z_1(t+1)=z_3(t),\\
z_2(t+1)=(\neg z_1(t) \wedge z_2(t) \wedge z_3(t))\vee x_1(t) ,\\
z_3(t+1)=z_2(t)\vee x_2(t).
\end{cases}
\end{equation}
Analogous to the analysis in Example \ref{ex2},  we can get the augmented system as follows:
\begin{equation}\label{ex2-eq3}
\xi(t+1)=L\xi(t),
\end{equation}
where $\xi(t)\in\Delta_{5^6}$ and
$L=\delta_{5^6}$ $[2501~2526~$ $2551~$ $\cdots~$ $15575~$ $15600$ $15625]$ $\in$ $\mathcal{L}_{5^6\times5^6}$.
And the approximate synchronous state set is consistent with \eqref{eqex1Lamda}.

A simple calculation shows that system \eqref{ex2-eq3} has six attractors. That is $\Omega=\{\mathcal{C}_1, \mathcal{C}_2, \mathcal{C}_3, \mathcal{C}_4, \mathcal{C}_5, \mathcal{C}_6\}$, where $\mathcal{C}_1=\{\delta_{5^6}^{7813}\}$, $\mathcal{C}_2=\{\delta_{5^6}^{11688}\}$, $\mathcal{C}_3=\{\delta_{5^6}^{11719}\}$, $\mathcal{C}_4=\{\delta_{5^6}^{15563}\}$, $\mathcal{C}_5=\{\delta_{5^6}^{15594}\}$ and $\mathcal{C}_6=\{\delta_{5^6}^{15625}\}$.
And the transient period of system $\eqref{ex2-eq3}$ is $\tau=8$.

In light of Algorithm 1,  we can derive that the maximum approximate synchronization basin satisfying $|\Phi_{\max}|=15376<5^6$.
That is to say, systems \eqref{ex2-eq1} and \eqref{ex2-eq2} are approximately synchronous with respect to $\Phi_{\max}$. From Theorem \ref{pro4}, we derive that the shortest synchronization time with respect to $\Phi_{\max}$ is $\Gamma_{\Phi_{\max}}=8$.
Take $\xi_0=\delta_{5^6}^{11755}\in\Phi_{\max}$. Note that $\delta_{5^6}^{11755}\sim(\frac{1}{4},\frac{1}{4},0,1,1,0)$. Then the trajectories of systems \eqref{ex2-eq1} and \eqref{ex2-eq2} starting from $X_0=(\frac{1}{4},\frac{1}{4},0)$ and $Z_0=(1,1,0)$ are shown in Fig. \ref{fig4}.
\begin{figure}[htb]
\centering
\includegraphics[width=3.3in]{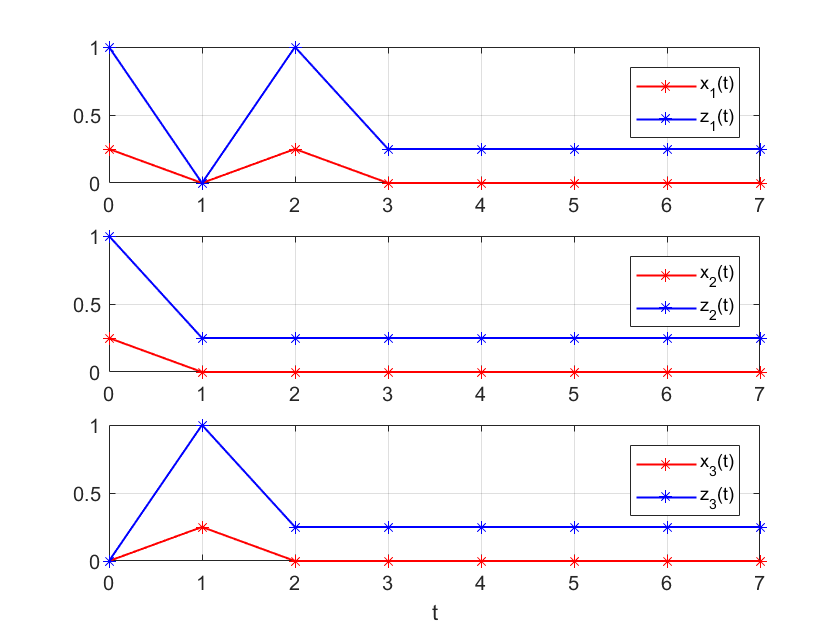}
\caption{The evolution of systems \eqref{ex2-eq1} and \eqref{ex2-eq2} starting from initial states $X_0=(\frac{1}{4},\frac{1}{4},0)$ and $Z_0=(1, 1, 0)$.}
\label{fig4}
\end{figure}

Since $\Phi_{\max}\neq\Delta_{5^6}$, systems \eqref{ex2-eq1} and \eqref{ex2-eq2} are not globally approximately synchronous according to Corollary \ref{cor1}.
Now, we try to draw pinning controllers into systems \eqref{ex2-eq1} and \eqref{ex2-eq2} to make the two systems achieve global approximate synchronization.
According to Algorithm 3, we can get that the set of attractors needed to be perturbed is $\Omega_0=\{\mathcal{C}_4\}$, where $\mathcal{C}_4=\{\delta_{5^6}^{15563}\}$.
Based on Algorithm 4, we derive that the transition matrix $L$ can be perturbed into
\begin{align*}
\overline{L}=\delta_{5^6}[2501~2526~2551\cdots15575~15600~15625]\in\mathcal{L}_{5^6\times5^6},
\end{align*}
and the set of pinning nodes is $\mathcal{P}=\{2,3,4\}$.
That is to say, $\mathcal{P}_1=\{2,3\}$ and $\mathcal{P}_2=\{1\}$.
From Theorem \ref{th3},
if pinning controllers are injected into the second and third nodes of system \eqref{ex2-eq1} and the first node of system \eqref{ex2-eq2}, then systems \eqref{ex2-eq1} and \eqref{ex2-eq2} achieve  global approximate synchronization.
Furthermore, the global shortest synchronization time is $\Gamma=9$  by Theorem \ref{pro4}.

Assume that the pinning state feedback controllers are $u_i(t)=\varphi_i(X(t),Z(t))$, $i\in\mathcal{P}$.
Through Algorithm 5, we can get that the structure matrices of $\varphi_i$, $i\in\mathcal{P}$ are
\begin{align*}
K_2=\delta_{5}[5~5~5~5~5~5~5~5~\cdots~5~5~5~5~5~5~5~5]\in\mathcal{L}_{5\times5^6},\\
K_3=\delta_{5}[1~1~1~1~1~1~1~1~\cdots~5~5~5~5~5~5~5~5]\in\mathcal{L}_{5\times5^6},\\
K_4=\delta_{5}[1~2~3~4~5~1~2~3~\cdots~ 3~ 4~ 5~ 1~2~ 3~4 ~5]\in\mathcal{L}_{5\times5^6}.
\end{align*}
Using Lemma \ref{lem3}, one can obtain the logical expressions of pinning state feedback controllers $\varphi_i$, $i\in\mathcal{P}$ recursively. Due to the limitation of space, we omit it here.
In addition, the the structure matrices of logical operators $\odot_{\varphi_i}$, $i\in\mathcal{P}$ are
\begin{align*}
M_2=\delta_5[1~1~1~1~1 ~1~2~2~2~2 ~1~2~3~3~3 ~1~2~3~4~4~ 1~2~3~4~5],\\
M_3=\delta_5[1~1~1~1~1 ~1~2~2~2~2 ~1~2~3~3~3 ~1~2~3~4~4~ 1~2~3~4~5],\\
M_4=\delta_5[1~2~3~4~5 ~2~2~3~4~5 ~3~3~3~4~5 ~4~4~4~4~5~ 5~5~5~5~5].
\end{align*}
It is easy to know that $\odot_{\varphi_2}=\vee$, $\odot_{\varphi_3}=\vee$ and $\odot_{\varphi_4}=\wedge$.

Take initial states $X_0=(\frac{2}{4}, \frac{2}{4}, 0)$ and $Z_0=(1, 1, \frac{2}{4})$. It is not hard to check that
\begin{equation*}
 (X_0,Z_0)=(\frac{2}{4}, \frac{2}{4}, 0, 1, 1, \frac{2}{4})\sim\delta_{5^6}^{8003}\notin\Phi_{max},
\end{equation*}
which means systems \eqref{ex2-eq1} and \eqref{ex2-eq2} cannot achieve approximate synchronization with respect to $\Phi_1:=\{\delta_{5^6}^{8003}\}$.
Fig. \ref{fig3} shows that the trajectories of systems \eqref{ex2-eq1} and \eqref{ex2-eq2} staring from $(X_0,Z_0)$ will enter into $(0, 0, 0, \frac{2}{4}, \frac{2}{4}, \frac{2}{4})$ after $3$ steps.  Note that $(0, 0, 0, \frac{2}{4}, \frac{2}{4}, \frac{2}{4})\sim \delta_{5^6}^{15563}\notin\Phi_{max}$, which is an attractor of system \eqref{ex2-eq3}.
If pinning controllers are drawn into systems \eqref{ex2-eq1} and \eqref{ex2-eq2} from
$t=4$, then by Theorem \ref{pro4}, we can get that the shortest approximate synchronization time with respect to $\Phi_2:=\{\delta_{5^6}^{15563}\}$ is $\Gamma_{\Phi_2}=6$, which is consistent with Fig. \ref{fig3}.
\begin{figure}[htb]
\includegraphics[width=5.5in]{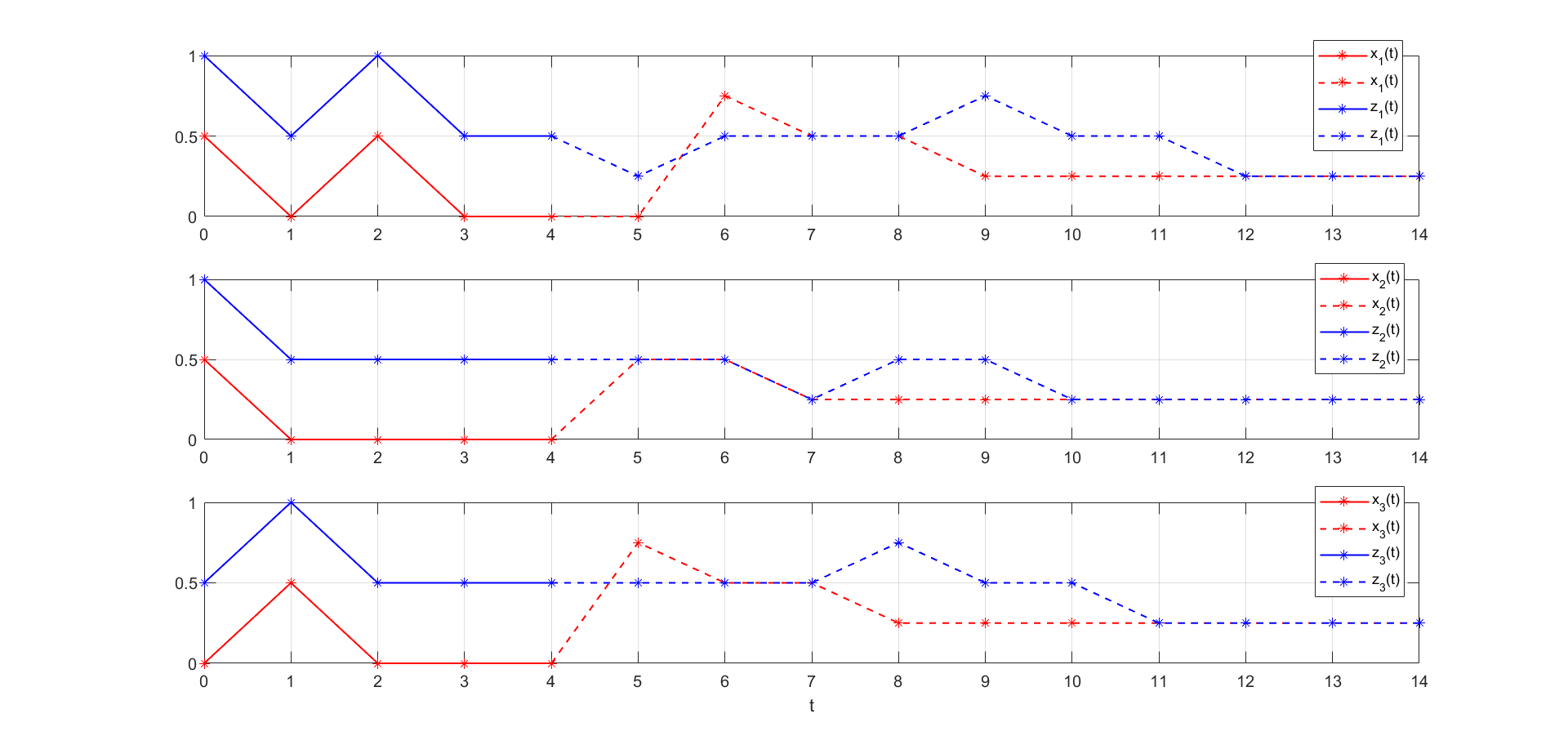}
\caption{Initial states: $X_0=(\frac{2}{4}, \frac{2}{4}, 0)$ and $Z_0=(1, 1, \frac{2}{4})$.
Solid lines: the states of the original systems \eqref{ex2-eq1} and \eqref{ex2-eq2}, dotted lines: the states of \eqref{ex2-eq1} and \eqref{ex2-eq2} with pinning controllers.}
\label{fig3}
\end{figure}
\end{example}

\begin{example}\label{ex1}
NFSRs are widely used in the field of information security, such as stream cipher and convolutional decoder.
A cascade connection of an NFSR into another NFSR is the main components in the Grain family of stream ciphers \cite{zhongjianghua2019NFSR}. Its structure diagram is shown in Fig. \ref{fig1}.
In this example, we consider a 4-valued cascade connection of two 3-stage NFSRs. Its working mechanism can be expressed as the following two coupled systems:
\begin{equation}\label{ex3-eq1}
\begin{cases}
x_1(t)=x_2(t),\\
x_2(t)=x_3(t),\\
x_3(t)=z_1(t)\oplus_4 f(x_1(t),x_2(t),x_3(t)),
\end{cases}
\end{equation}
\begin{equation}\label{ex3-eq2}
\begin{cases}
z_1(t)=z_2(t),\\
z_2(t)=z_3(t),\\
z_3(t)=g(z_1(t),z_2(t),z_3(t)),
\end{cases}
\end{equation}
where $x_i(t)\in\mathcal{D}_4$, $z_i(t)\in\mathcal{D}_4$, $i=1,2,3$, are state variables of NFSR $1$ and NFSR $2$, respectively;
$f(x_1(t),x_2(t),x_3(t))=x_1(t)\wedge x_2(t)\wedge\neg x_3(t)$, $g(z_1(t),z_2(t),z_3(t))=z_1(t)\wedge z_2(t)\wedge\neg z_3(t)$ are nonlinear feedback functions of NFSR $1$ and NFSR $2$, respectively.
\begin{figure}[H]
\centering
\includegraphics[width=3.5in]{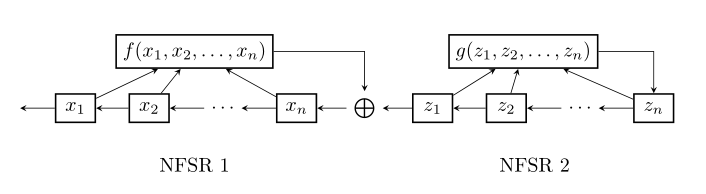}
\caption{Structure diagram of a cascade connection of two NFSRs }
\label{fig1}
\end{figure}

Let $x(t)=\ltimes_{i=1}^3x_i(t)\in\Delta_{4^3}$, $z(t)=\ltimes_{i=1}^3z_i(t)\in\Delta_{4^3}$ and $\xi(t)=x(t)z(t)\in\Delta_{4^6}$. Systems \eqref{ex3-eq1} and \eqref{ex3-eq2} can be converted into the following augmented system:
\begin{equation}\label{ex3-eq3}
\xi(t+1)=L\xi(t),
\end{equation}
where
$L=\delta_{4^6}$ $[4~7~10~13$ $\cdots$ $4084~4088$ $4092~4096]$ $\in$ $\mathcal{L}_{4^6\times4^6}$.
According to \eqref{eq11}, the synchronous state set of system \eqref{ex3-eq3} is
$\Lambda^\prime=\{\delta_{4^6}^{(i-1)4^3+i}: i=1,2,\ldots, 4^3\}\notag$
with $|\Lambda^\prime|=64$.

A simple calculation shows that system \eqref{ex3-eq3} has three attractors. It is $\Omega=\{\mathcal{C}_1,\mathcal{C}_2,\mathcal{C}_3\}$, where $\mathcal{C}_1=\{\delta_{4^6}^{1387}\}$, $\mathcal{C}_2=\{\delta_{4^6}^{2752}\}$ and $\mathcal{C}_3=\{\delta_{4^6}^{4096}\}$. The transient period of system \eqref{ex3-eq3} is $\tau=11$.
It is not hard to check that
\begin{equation}\label{ex3-eq5}
Col(L^{11})\nsubseteq\Lambda^\prime,
\end{equation}
which means that systems \eqref{ex3-eq1} and \eqref{ex3-eq2} are not globally completely synchronous.
Further, we can get that the maximum synchronization basin of systems \eqref{ex3-eq1} and \eqref{ex3-eq2} satisfies $|\Phi^\prime_{\max}|=2576<4^6$. That is to say, systems \eqref{ex3-eq1} and \eqref{ex3-eq2} are completely synchronous with respect to $\Phi^\prime_{\max}$.
Moreover, the shortest synchronization time with respect to $\Phi^\prime_{\max}$ is $\Gamma^\prime(\Phi^\prime_{\max})=10$.
Hence, the state trajectories of NFSR $1$ and NFSR $2$ starting from $\Phi^\prime_{\max}$ are exactly consistent after $10$ steps, which further indicates that the state trajectories of NFSR $1$ and NFSR $2$ starting from $\Delta_{4^6}\setminus\Phi^\prime_{\max}$ are different.

It should be pointed out that the investigation about complete synchronization of a cascade connection of two NFSRs is meaningful.  If two NFSRs in a cascade connection are completely synchronous, then the output sequences of two NFSRs are completely identical after a certain moment, which means that two NFSRs are equivalent. Reference \cite{zhongjianghua2019NFSR} has studied the equivalence of cascade connections of two NFSRs, which shows that finding properties of equivalent cascade connections of two NFSRs plays a great role in the design of the Grain family of stream ciphers.
\end{example}

\begin{example}\label{ex4}
In this example, we can consider the complete synchronization of an epigenetic model of gene regulation by protein-DNA interaction, which was introduced in \cite{1963Temporal, Heidel2003biochemicalsystems}. This model is composed of genetic locus $X_i$,  cellular structure $Y_i$, cellular locus $Z_i$, $i=1,2$ and its dynamic is given as follows:
\begin{equation}\label{ex4-1}
\begin{cases}
X_1(t)=\neg Z_1(t)\wedge \neg Z_2(t),\\
Y_1(t+1)=X_1(t),\\
Z_1(t+1)=Y_1(t),
\end{cases}
\end{equation}
and
\begin{equation}\label{ex4-2}
\begin{cases}
X_2(t)=\neg Z_2(t)\wedge \neg Z_1(t),\\
Y_2(t+1)=X_2(t),\\
Z_2(t+1)=Y_2(t).
\end{cases}
\end{equation}
Denote $\xi_1(t)=X_1(t)Y_1(t)Z_1(t)\in\Delta_{2^3}$, $\xi_2(t)=X_2(t)Y_2(t)Z_2(t)\in\Delta_{2^3}$ and $\xi(t)=\xi_1(t)\xi_2(t)\in\Delta_{2^6}$. We get the following system:
\begin{equation}\label{ex4-3}
  \xi(t+1)=L\xi(t),
\end{equation}
where $L=\delta_{64}$ $[37~37~38$ $38~\cdots~$ $63~27$ $64~28]$ $\in$ $\mathcal{L}_{64\times64}$.
According to \eqref{eq11}, one can get that the synchronous state set is
 $\Lambda^\prime=\{\delta_{64}^1, \delta_{64}^{10}, \delta_{64}^{19}, \delta_{64}^{28},  \delta_{64}^{37}, \delta_{64}^{46}, \delta_{64}^{55}, \delta_{64}^{64}\}$.

A simple calculation shows that the transient period of system \eqref{ex4-3} is $\tau=3$.
Combining with  Theorem \ref{th1}, one can get that
\begin{align}
Col(L^3)\subseteq \Lambda^\prime.
\end{align}
Therefore, systems \eqref{ex4-1} and \eqref{ex4-2} are completely synchronous, which is consistent with the conclusion obtained in reference \cite{chenhongwei2020LocalSynchronization}.

However, the shortest synchronization time is not specified in \cite{chenhongwei2020LocalSynchronization}.
Using Theorem \ref{pro4}, we derive that the shortest synchronization time of systems \eqref{ex4-1} and \eqref{ex4-2} is $\Gamma=3$.
\end{example}

\section{Conclusion}\label{sec7}

In this paper, the approximate synchronization problem  of two coupled  MVLNs has been addressed. Based on approximate synchronous state set, several necessary and sufficient criteria have been presented.
An effective algorithm has been developed for finding the maximum approximate synchronization basin.
And the method for calculating the shortest synchronization time has been provided.
On the other hand,  to make two systems achieve global approximate synchronization, pinning control strategy has been considered and three algorithms have been established, which  provide a constructive procedure for determining pinning nodes and designing pinning state feedback controllers.
In addition, relevant results have been used to analyze the complete synchronization of $k$-valued ($k\geq2$) logical networks.
Finally, the validity of the main results has been illustrated by four examples.

Note that the notion of approximate synchronization given in Definition 4 is based on the maximum error between corresponding state nodes. More generally, we can give another definition of approximate synchronization in average sense, which is replacing \eqref{eqrem1} by
\begin{equation*}
  \frac{1}{n}\left(\sum_{i=1}^n|x_i(t;X_0,Z_0)-z_i(t;X_0,Z_0)|\right)\leq \frac{\gamma}{k-1},~t\geq \rho.
\end{equation*}
Evidently, Definition 4 is a special case of the definition in average sense. Hence, it will be interesting to further investigate the approximate synchronization of average sense in the future work.
Besides, as Remark \ref{rem2} emphasizes, the pinning state feedback controllers designed in this paper are not unique. It is also significant to further investigate how to design pinning controllers for  minimizing the approximate synchronization time or the number of pinning nodes in the future work.

\section*{Appendix}
The proofs of Section IV are presented as follows.

\begin{proof}[{\bf  Proof of Theorem 1}]
1) First, we prove that  $\Omega_\Phi$, the set of all attractors of $\Phi$, can be expressed as
\begin{equation}
\Omega_\Phi=\bigcup_{t=\tau}^{\tau+\lambda-1}\{Col_i(L^t): \delta_{k^{2n}}^i\in\Phi\}.
\end{equation}

In fact, according to Lemma \ref{lem3}, for any state $\delta_{k^{2n}}^i\in\Phi$, the attractor of $\delta_{k^{2n}}^i$ is
  \begin{align}
    \Omega_{\delta_{k^{2n}}^i}&=\{L^t\delta_{k^{2n}}^i: t\geq\tau\}\notag\\
                             &=\{Col_i(L^t): t\geq\tau\}\notag\\
                             &=\{Col_i(L^t): \tau\leq t\leq\tau+\lambda-1\}.
  \end{align}
Hence,  $\Omega_\Phi$ can be expressed as
 \begin{align}
\Omega_\Phi&=\bigcup_{\delta_{k^{2n}}^i\in\Phi}\{Col_i(L^t): \tau\leq t\leq\tau+\lambda-1\}\notag\\
             &=\bigcup_{t=\tau}^{\tau+\lambda-1}\{Col_i(L^t): \delta_{k^{2n}}^i\in\Phi\}.
 \end{align}

(Sufficiency) Assume  $\bigcup_{t=\tau}^{\tau+\lambda-1}\{Col_i(L^t): \delta_{k^{2n}}^i\in\Phi\}\subseteq\Lambda$.
Then $\Omega_\Phi\subseteq\Lambda$.  It means that for any initial state $\xi_0\in\Phi$, $\xi(t;\xi_0)\in\Omega_\Phi\subseteq\Lambda$ holds for any $t\geq\tau$. Hence, systems \eqref{sys3} and \eqref{sys4} are  approximately synchronous with respect to $\Phi$.

(Necessity) Suppose that  systems \eqref{sys3} and \eqref{sys4} are approximately synchronous with respect to $\Phi$.
According to Corollary \ref{th0}, there exists an integer $\rho\in\mathbb{Z}_+$ such that $\xi(t;\xi_0)\in\Lambda$ holds for any $\xi_0\in\Phi$ and $t\geq\rho$.
If equation \eqref{eq4} is invalid, then there exists an attractor $\mathcal{C}_l\in\Omega_\Phi$ such that $\mathcal{C}_l\nsubseteq\Lambda$. However, there exists $\xi^\prime_0\in\Phi$ such that $\xi(t;\xi^\prime_0)\in\mathcal{C}_l$ holds for any $t\geq\tau$, which is a contradiction. Therefore, the necessity is established.

2) It is noticed that if $\Phi=\Delta_{k^{2n}}$, then the set of attractors of $\Phi$
can be expressed as $\Omega_\Phi=\Omega=Col(L^\tau)$. With this in mind,  conclusion 2) is obvious based on the proof of conclusion 1).
\end{proof}

\begin{proof}[{\bf Proof of Corollary 2}]
In light of the definition of $\Xi_1$, we see that $L^t\Xi_1=\sum_{\delta_{k^{2n}}^i\in\Phi}Col_i(L^t)$. Consequently, we have
\begin{align*}
&sgn\left(\left(\sum_{t=\tau}^{\tau+\lambda-1}L^t\right)\Xi_1\right)\\
&=sgn\left(\sum_{t=\tau}^{\tau+\lambda-1}L^t\Xi_1\right)\\
&=sgn\left(\sum_{t=\tau}^{\tau+\lambda-1}\sum_{\delta_{k^{2n}}^i\in\Phi}Col_i(L^t)\right).
\end{align*}
Then $\left[sgn\left(\left(\sum_{t=\tau}^{\tau+\lambda-1}L^t\right)\Xi_1\right)\right]_{l,1}=1$ is equivalent to $\delta_{k^{2n}}^l$ $\in$ $\bigcup_{t=\tau}^{\tau+\lambda-1}\{Col_i(L^t): \delta_{k^{2n}}^i\in\Phi\}$.
From the definition of $\Xi_2$, it is also clear that $[\Xi_2]_{l,1}=1$ is equivalent to $\delta_{k^{2n}}^l\in\Lambda$.
Hence, $\bigcup_{t=\tau}^{\tau+\lambda-1}\{Col_i(L^t): \delta_{k^{2n}}^i\in\Phi\}\subseteq\Lambda$ is equivalent to $sgn\left(\left(\sum_{t=\tau}^{\tau+\lambda-1}L^t\right)\Xi_1\right)\leq\Xi_2$.
Similarly,  $Col(L^\tau)\subseteq\Lambda$ is equivalent to $sgn(L^\tau\mathbf{1}_{k^{2n}})\leq\Xi_2$.
According to Theorem \ref{th1}, the proof is completed.
\end{proof}

\begin{proof}[{\bf Proof of Algorithm 1}]
According to the Theorem 2, systems \eqref{sys3} and \eqref{sys4} are approximately synchronous with respect to $\Phi$ if and only \eqref{eqth2} holds. Note that if the set $\Phi$ is taken as a singleton set, i.e., $\Phi=\{\delta_{k^{2n}}^i\}$, then \eqref{eqth2} becomes
\begin{equation}\label{eqal1}
sgn\left(\sum_{t=\tau}^{\tau+\lambda-1}Col_i\left(L^t\right)\right)\leq\Xi_2.
\end{equation}
Therefore, systems \eqref{sys3} and \eqref{sys4} are approximately synchronous with respect to $\Phi=\{\delta_{k^{2n}}^i\}$ if and only \eqref{eqal1} holds.
Hence, the maximum approximate synchronization basin contains all states satisfying condition \eqref{eqal1}, i.e., Algorithm 1 is valid.
\end{proof}

\begin{proof}[{\bf Proof of Algorithm 2}]
From the definition of invariant subset, we can see that $S$ is an invariant subset of system \eqref{sys5} if and only if $L_S\subseteq S$, where $L_S:=\{Col_i(L): \delta_{k^{{2n}}}^i\in S\}$.
Consequently, $\delta_{k^{2n}}^i$ belongs to the maximum invariant subset set of $\Lambda$ if and only if $Col_i(L)\in\Lambda$. Hence, correctness of Algorithm 2 is clear.
\end{proof}

\begin{proof}[{\bf Proof of Algorithm 3}]
Note that our purpose is to obtain the attractors that are not in $\Lambda$ for the  design of pinning controllers. Combined with Hadamard product of matrices, it follows from \eqref{eq7-1} and \eqref{eq7-2} that $\Xi^{i}\circ\Xi_2=\Xi^{i}$ is equivalent to $\mathcal{C}_i\in\Lambda$. Therefore, if $\mathcal{C}_i\in\Omega_0$, then $\Xi^{i}\circ\Xi_2\neq\Xi^{i}$. Apparently, the efficiency of Algorithm 3 is clear.
\end{proof}

\begin{proof}[{\bf Proof of Theorem \ref{pro4}}]
1) Assume systems \eqref{sys3} and \eqref{sys4} are approximately synchronous with respect to $\Phi$. Then Theorem \ref{th1} shows that all attractors of $\Phi$ belong to $\Lambda$, i.e., $\Omega_\Phi\subseteq\Lambda$.
Denoting $\varpi:=\mathop{\arg\min}_{1\leq t\leq \tau}\{sgn(L^t\Xi_1)\leq\Xi_3\}$, we prove that $\Gamma_{\Phi}=\varpi$.

On the one hand, since $\tau$ is the transient period of system \eqref{sys5},
the trajectory of system \eqref{sys5} starting from any initial state $\xi_0\in\Phi$ will eventually reach into an attractor of $\Phi$ after the  moment $\tau$,
i.e.,  $\xi(t;\xi_0)\in\Omega_\Phi\subseteq\Lambda$ holds for any $t\geq\tau$ and $\xi_0\in\Phi$. It means that systems \eqref{sys3} and \eqref{sys4} have achieve approximate synchronization when $t=\tau$. Note that $\Gamma_{\Phi}$ is the shortest approximate synchronization time. Hence, $1\leq \Gamma_{\Phi}\leq \tau$.

On the other hand, condition
\begin{equation}
sgn(L^t\Xi_1)\leq\Xi_3
\end{equation}
implies that for any $\xi_0\in\Phi$, $\xi(t;\xi_0)\in \mathcal{I}_m^\Lambda$. Therefore, $t=\varpi$ is the smallest integer such that $\xi(\varpi;\xi_0)\in \mathcal{I}_m^\Lambda$ holds for any $\xi_0\in\Phi$.
Combining the definition of maximum invariant subset, one sees that  $\varpi$ is the smallest integer such that  $\xi(t;\xi_0)\in\Lambda$ holds for any $t\geq\varpi$ and $\xi_0\in\Phi$. That is $\Gamma_{\Phi}=\varpi$.


2) Assume systems \eqref{sys3} and \eqref{sys4} are globally approximately synchronous.
It is noted that if $\Phi=\Delta_{k^{2n}}$, then $\Xi_1=\mathbf{1}_{k^{2n}}$.
As a result, we can conclude that the globally  shortest approximate synchronization time $\Gamma$ satisfies \eqref{eqpro4-2} by the proof of conclusion 1).
\end{proof}

\begin{proof}[{\bf Proof of Theorem 3}]
For the case of $\Phi_{\max}=\emptyset$, after the process of lines 3-9 in Algorithm 4, it is clear that $L_\Lambda:=\{Col_i(\overline{L}): \delta_{k^{{2n}}}^i\in \Lambda\}\subseteq \Lambda$, which means that
the approximately synchronous state set $\Lambda$ becomes an invariant subset of $\Delta_{k^{2n}}$. Lines 10-13 in Algorithm 4 guarantee that all states in $\Delta_{k^{2n}}$ can reach into $\Lambda$ after finite steps.
In other words,  there exists an integer $\rho_1\in\mathbb{Z}_+$  such that for any initial state $\xi_0\in\Delta_{k^{2n}}$, $\xi(\rho_1;\xi_0)\in\Lambda$.
Consequently, we derive that for any $\xi_0\in\Delta_{k^{2n}}$,
\begin{equation}\label{eqth2-1}
\xi(t;\xi_0)\in\Lambda, ~t\geq\rho_1.
\end{equation}
For the case of $\Phi_{\max}\neq\emptyset$, one sees that $\mathcal{I}_m^\Lambda\neq\emptyset$.
According to lines 15-18 in Algorithm 4, all states in  $\Delta_{k^{2n}}$ can reach into $\mathcal{I}_m^\Lambda$ after finite steps. That is to say,
there exists an integer $\rho_2\in\mathbb{Z}_+$ such that for any $\xi_0\in\Delta_{k^{2n}}$,
\begin{equation}\label{eqth2-2}
\xi(t;\xi_0)\in \mathcal{I}_m^\Lambda\subseteq\Lambda,~ t\geq\rho_2.
\end{equation}
On the basis of Corollary \ref{th0}, it follows from \eqref{eqth2-1} and \eqref{eqth2-2} that systems \eqref{sys3} and \eqref{sys4} can achieve global approximate synchronous. The proof is completed.
\end{proof}

\begin{proof}[{\bf Proof of Proposition \ref{pro2}}]
It is noticed that equation \eqref{eq6} makes sense if and only if $m^i_{(w^i_l-1)k+h^i_l}$, $l\in\{1,2,\ldots,k^{2n}\}$ can be uniquely determined by $w^i_l$. Obviously, only in the case of $h^i_{l_1}=h^i_{l_2}, \overline{h}^i_{l_1}\neq\overline{h}^i_{l_2}$, the value of $m^i_{(w^i_l-1)k+h^i_l}$ can not be determined uniquely, which is needed to discuss.

Based on \eqref{eq9} and \eqref{eq10}, it is easy to see that $0\leq|\mathfrak{T}_\sigma^i|\leq k$.

1) If $|\mathfrak{T}_\sigma^i|\leq1$, $\sigma=1,2,\ldots,k$, one can see that once the value of $w^i_l$ is fixed, then the value of $m^i_{(w^i_l-1)k+h^i_l}$ is equal to $\overline{h}^i_l$. That is $w^i_l$ can take any value from $\{1, 2,\dots,k\}$.

2) Assume there exists $\sigma^\prime\in\{1,2,\ldots,k\}$ such that $2\leq|\mathfrak{T}_{\sigma\prime}^i|\leq k$. We denote $\mathfrak{T}_{\sigma^\prime}^i:=\{T_{\sigma,\beta_1}^i,T_{\sigma,\beta_2}^i,\ldots,
T_{\sigma,\beta_{|\mathfrak{T}_\sigma^i|}}^i\}$.
On the contrary, if there exist $l_{\mu}\in T_{\sigma^\prime,\beta_\mu}^i$, $l_{\nu}\in T_{\sigma^\prime,\beta_\nu}^i$  and $\mu\neq\nu$ such that
$w^i_{l_\mu}=w^i_{l_\nu}$.
Then one can get from \eqref{eq6} that
\begin{equation}
\beta_\mu=m^i_{(w^i_{l_\mu}-1)k+\sigma^\prime}=m^i_{(w^i_{l_\nu}-1)k+\sigma^\prime}=\beta_\nu,
\end{equation}
but $\beta_\mu\neq\beta_\nu$. It is a contradiction.
Hence, the conclusion is established.
\end{proof}

\section*{References}

%

\end{document}